\documentclass[preprint,amsmath,amssymb,floatfix,nofootinbib]{revtex4-2}

\usepackage[T1]{fontenc}
\usepackage[utf8]{inputenc}
\usepackage{lmodern}
\usepackage{microtype}            
\usepackage{amsthm}
\theoremstyle{definition}

\newtheorem{lemma}{Lemma}
\newtheorem{proposition}{Proposition}
\theoremstyle{remark}
\newtheorem*{remark}{Remark}
\usepackage{booktabs}
\usepackage{graphicx}
\usepackage{placeins}
\usepackage{xcolor}
\definecolor{linknavy}{RGB}{14,54,110}
\usepackage[colorlinks=true,linkcolor=linknavy,citecolor=linknavy,urlcolor=linknavy]{hyperref}


\newcommand{\R}{\mathbb{R}}
\newcommand{\cM}{\mathcal{M}}
\newcommand{\hcM}{\widehat{\mathcal{M}}}
\newcommand{\cN}{\mathcal{N}}
\newcommand{\hatH}{\widehat{H}}
\newcommand{\hatPhi}{\widehat{\Phi}}
\newcommand{\Ctau}{\mathcal{C}}
\newcommand{\Pihstep}{\Phi^{\tau}_{\mathrm{Pih}}}
\newcommand{\numstep}{\Phi^{\tau}_{\mathrm{num}}}
\newcommand{\Hflow}{\Phi^{\tau}_{H}}
\newcommand{\heta}{\hat{\eta}}
\newcommand{\Hc}{H_{\mathrm{c}}}
\newcommand{\Emech}{E_{\mathrm{mech}}}
\newcommand{\rhoeta}{\rho_{\eta}}

\begin{document}

\title{A Projected Semiexplicit Integrator for Dissipative Systems with
  Configuration-Dependent Kinetic Energy:\\
  Contact-Herglotz Formulation and Benchmarks}

\author{Lorena Loera-Galeana}
\email{loera\_l@tec.mx}
\author{Santiago Mejía}
\email{A01751866@tec.mx}
\author{Espartaco Alvarado}
\email{A00839913@tec.mx}
\author{Héctor Medel-Cobaxin}
\email[Corresponding author: ]{hmedel@tec.mx}

\affiliation{Tecnologico de Monterrey, Escuela de Ingenieria y Ciencias,
  Ave. Eugenio Garza Sada 2501 Sur, Col. Tecnologico,
  Monterrey, N.L. 64700, Mexico}

\date{August 17, 2026}

\begin{abstract}
Contact Hamiltonian dynamics gives dissipative mechanics an intrinsic action
variable, but explicit contact splittings reach only those kinetic energies
whose terms are exactly integrable: this includes diagonal
configuration-dependent metrics with frozen-coordinate coefficients (the
spherical pendulum, a particle on a torus), and it excludes dense metrics with
momentum cross terms, of which the double pendulum is the flagship. We introduce
a projected Pihajoki-contact integrator for this non-separable setting,
combining phase-space duplication, symmetric projection onto the physical
diagonal, and constant-friction damping half-steps, with the action factor
carried by an exact Herglotz update. As in the projected
extended-phase-space framework it builds on, the construction needs no binding
parameter, returns the duplicated copies to the physical diagonal exactly at
every step, and confines any nonlinear solve to the $2n$ projection variables.
For constant friction the step rescales $\omega=d\eta$ by the exact factor
$e^{-\gamma\tau}$ when the projection is solved exactly (a classical
conformally symplectic identity, realized here for this class), while
time-symmetry, consistency, and smoothness yield an $\mathcal{O}(\tau^3)$
one-step contact-form residual, a bound not specific to the contact form. On
the damped double pendulum, spherical pendulum, and torus particle the method
is second-order accurate, reproduces the contact decay law, and controls long-time
energy and contact drift in coarse or stiff regimes where both the Tao
baseline and the unprojected symmetric average lose the solution. A
head-to-head with exact-contactomorphism splittings delimits the niche: where
a frozen-coordinate splitting exists it preserves the contact form exactly and
wins at matched cost; for the dense double-pendulum metric the explicit
splitting realizable today is first-order with a prohibitive error constant,
and the exactly conformal second-order alternative we construct (implicit
midpoint on the kinetic map, lifted through its generating function) is fully
implicit, so what the projected method offers on dense metrics is second-order
accuracy with the solve confined to the projection. The contact-form estimate is local, one-step, and
constant-friction.
\end{abstract}

\maketitle

\noindent\textbf{Keywords:} Contact geometry; Herglotz variational principle;
Geometric integration; Contact Hamiltonian systems; Non-separable Hamiltonians;
Dissipative mechanics; Pihajoki method.

\section{Introduction}
\label{sec:introduction}

Dissipation is ubiquitous in mechanics but does not fit naturally within the
symplectic formulation: friction, drag, and internal damping produce irreversible
energy loss, classically modeled by external constructions such as Rayleigh
dissipation, port-Hamiltonian couplings, or GENERIC formalisms~\cite{Carinena2024},
which need not preserve the variational character of the equations. Contact
geometry provides an intrinsic alternative: extending the phase space with an
action variable $z$ represents energy loss geometrically, closely tied to the
Herglotz variational principle in which the action solves a differential
equation~\cite{Herglotz1930,deLeon2021review}. The contact Hamiltonian and
Lagrangian formalisms and their symmetry properties are by now well
developed~\cite{Bravetti2017,deLeon2019}, and they provide the geometric
language in which we frame both the numerical method and the diagnostics used to
assess it.

This structure has motivated geometry-preserving integrators. Contact variational
and splitting schemes~\cite{McLachlanQuispel2002} are effective when the contact
Hamiltonian is \emph{separable}, $H=T(p)+V(q)+\gamma(q)z$: the sub-flows integrate
explicitly and symmetric compositions give second-order contact methods that reduce
to symplectic schemes as
$\gamma\to0$~\cite{Vermeeren2019,Bravetti2020,Zadra2021,Bravetti2021}, covering the
damped oscillator, the simple pendulum, and other one-degree-of-freedom models. A
complementary, variational route builds contact integrators from a discrete Herglotz
principle rather than from an operator splitting: the discrete contact mechanics
of~\cite{Simoes2021} yields integrators that are conformally contact by construction,
specializing to the contact setting the discrete-variational treatment of forced and
dissipative systems~\cite{Kane2000}. Both the splitting and the variational
constructions, however, are formulated for separable or variationally defined
Lagrangians. Contact---more generally Jacobi---dynamics can also be lifted to
a homogeneous symplectic realization and discretized there by generating
functions~\cite{Araujo2026}, at the price of an implicit solve per step; the
reported examples have kinetic energies without configuration-dependent
coupling. A recent alternative route is that of
Kevrekidis~\cite{Kevrekidis2026}, who shows that the Lie algebra generated by
strict and prolonged contact Hamiltonians contains every Hamiltonian that is
polynomial in $p$ with coefficients depending on $(q,z)$, and builds high-order
contact splittings from that representation. This construction
covers the present class: $H=\tfrac12 g^{ij}(q)p_ip_j+V(q)+\gamma(q)z$ is
quadratic in $p$ with $(q,z)$-dependent coefficients, so it lies in that algebra
exactly, with no polynomial approximation, and the resulting sub-steps are exact
contactomorphisms. On our target class that construction is therefore
theoretically stronger than what we prove: it preserves the contact structure
exactly, where our estimate is $\mathcal{O}(\tau^3)$, local and one-step. What it
costs is a commutator representation whose composition requires many sub-steps and
both signs of the time step, and its reported demonstrations are
one-degree-of-freedom. The present method takes the opposite trade: a direct,
second-order, two-sub-step treatment of the geodesic kinetic term on the benchmark
geometries below, bought with a weaker theorem. The
difficulty is the \emph{non-separable} case, where $T(q,p)=\tfrac12 g^{ij}(q)p_ip_j$
couples positions and momenta through a configuration-dependent metric, as for the
double pendulum, spherical pendulum, and particles on Riemannian manifolds. The
explicit kinetic shear is then unavailable and contact splitting becomes implicit
or must be replaced.

Once explicit contact splitting is unavailable, a non-separable system must be
advanced by some other integrator, and the standard candidates each involve a
compromise. Implicit midpoint gives a symmetric second-order
reference but needs a full nonlinear solve each step and is not designed to
preserve the contact form in the discretization used here. Tao's extended
phase-space method~\cite{Tao2016} restores explicitness by duplicating the phase
space and binding the copies with an integrable rotation, but it reads off one copy
without returning the duplicated variables exactly to the constraint manifold.
The conservative side of this extended-phase-space program remains active:
explicit symplectic integrators for general non-separable Hamiltonians have
recently been shown to exist via an invariant submanifold of the extended
space~\cite{Mei2025}; no dissipative counterpart exists so far.

The same obstruction is visible in the conformally symplectic literature, which
predates the contact framing. Modin and
S\"oderlind~\cite{ModinSoderlind2011} integrate Hamiltonian systems on a
Riemannian configuration manifold perturbed by Rayleigh damping using exactly
the three-term composition (damping half-step, symplectic Strang step, damping
half-step) that reappears in Section~\ref{subsec:dissipation} below, and the
conformal rescaling identity such a composition satisfies is Theorem~3.3
of~\cite{Franca2020}. Neither the composition nor the identity is claimed here as
new. For the present purpose the relevant point is the hypothesis their method
carries: it is ``crucial'', they state, that one knows coordinates in which the
kinetic flow is explicitly computable, ``typically accomplished by choosing
Cartesian coordinates in which the inertia operator $M$ is independent of $q$'',
and the $q$-dependent case is relegated to an implicit Runge--Kutta
sub-step. The configuration-dependent metric is thus either transformed away
by a change of coordinates or handed to an implicit sub-step rather than
treated in its own chart; a semiexplicit treatment of that case is the gap
this paper addresses.

Our contribution is narrower than the ingredient list
suggests: we formulate representative non-separable dissipative
systems in contact-Herglotz form, and we construct and benchmark a semiexplicit
projected integrator for constant-friction non-separable contact Hamiltonians,
combining phase-space duplication~\cite{Pihajoki2015}, symmetric projection onto
the physical diagonal~\cite{Jayawardana2023}, constant-friction contact damping,
and a Herglotz
action update for $z$. None of these ingredients is new individually, and neither
is the conformal composition they are assembled
into~\cite{ModinSoderlind2011,Franca2020}; what we claim is their combination
into a single contact-Herglotz scheme for
$H=\tfrac12 g^{ij}(q)p_ip_j+V(q)+\gamma(q)z$ that acts directly on the
configuration-dependent metric rather than assuming it away, returns the copies
to the diagonal by symmetric averaging at every step, confines the nonlinear
solve to the $2n$ projection variables, and admits a local contact-conformal
estimate under constant friction. Sections~\ref{sec:background}
and~\ref{sec:reeb-lie} recall the
background and Reeb diagnostic; Section~\ref{sec:problem} the
separable/non-separable dichotomy and benchmark systems; Section~\ref{sec:pihajoki}
the method and its analytical results; Section~\ref{sec:comparison} the comparison
methods; Section~\ref{sec:experiments} the numerical experiments; and
Section~\ref{sec:discussion} the discussion and limitations.

\section{Contact Hamiltonian background}
\label{sec:background}

Contact Hamiltonian mechanics lives on the extended phase space
$\cM = T^*Q\times\R$ with Darboux coordinates $(q^i,p_i,z)$. The contact
structure is the one-form
\begin{equation}
  \eta = dz - p_i\,dq^i,
  \label{eq:contact_form}
\end{equation}
which satisfies $\eta\wedge(d\eta)^n\neq 0$, so that $(\cM,\eta)$ is a
$(2n+1)$-dimensional contact manifold.

\subsection{Contact Hamilton equations}

For a Hamiltonian $H\colon\cM\to\R$, the associated contact Hamiltonian vector field $X_H$ is defined
by $\iota_{X_H}\eta=-H$ and
$\iota_{X_H}d\eta=dH-(\partial H/\partial z)\,\eta$, yielding
\begin{align}
  \dot{q}^i &= \frac{\partial H}{\partial p_i}, \label{eq:ham_q}\\
  \dot{p}_i &= -\frac{\partial H}{\partial q^i}
               - p_i\,\frac{\partial H}{\partial z}, \label{eq:ham_p}\\
  \dot{z}   &= p_i\,\frac{\partial H}{\partial p_i} - H.
               \label{eq:ham_z}
\end{align}
When $\partial H/\partial z=0$ these reduce to the standard Hamilton equations,
with $z$ tracking the classical action along the trajectory.

\subsection{Energy decay}

Along solutions of \eqref{eq:ham_q}-\eqref{eq:ham_z}, 
differentiating the Hamiltonian shows that the symplectic cross-terms cancel, and
hence
\begin{equation}
  \dot{H} = -H\,\frac{\partial H}{\partial z}.
  \label{eq:energy_decay}
\end{equation}
For the mechanically standard choice $H=H_0(q,p)+\gamma z$ with $\gamma>0$
constant, this integrates to $H(t)=H(0)e^{-\gamma t}$, while
\eqref{eq:ham_p} becomes $\dot{p}_i= -\partial_{q^i}H_0-\gamma p_i$,
reproducing linear (Rayleigh-type) damping.

\subsection{Herglotz variational origin}

Equations~\eqref{eq:ham_q}-\eqref{eq:ham_z} can be obtained from the Herglotz
variational principle~\cite{Herglotz1930}. Given a Lagrangian $L(q,\dot q,z)$, define the action variable $z(t)$
by $\dot z=L(q,\dot q,z)$, with initial condition $z(0)=z_0$. One then extremizes the final value $z(T)$ over curves with
fixed endpoints. Stationarity yields the Herglotz-Euler-Lagrange equation
\begin{equation}
  \frac{d}{dt}\frac{\partial L}{\partial\dot{q}^i}
  -\frac{\partial L}{\partial q^i}
  = \frac{\partial L}{\partial\dot{q}^i}\,\frac{\partial L}{\partial z},
  \label{eq:hel}
\end{equation}
which the Legendre transform $p_i=\partial L/\partial\dot{q}^i$,
$H=p_i\dot q^i-L$, maps exactly onto \eqref{eq:ham_q}-\eqref{eq:ham_z}.

The variable $z$ is therefore not an arbitrary auxiliary coordinate but the
accumulated action, determined by $\dot z=L(q,\dot q,z)$: for
$H=E_{\mathrm{mech}}+\gamma z$ the full contact Hamiltonian obeys the contact
decay law while $E_{\mathrm{mech}}$ and $\gamma z$ account for the exchange
between mechanical energy and dissipated action, so $z$ records the action
associated with dissipation rather than serving as a posterior diagnostic.

\section{Reeb fields and the numerical contact diagnostic}
\label{sec:reeb-lie}

This section records only the contact-geometric identities used later to define
the numerical diagnostics, without rederiving the standard Reeb-field calculus;
the conventions are those of Bravetti-Cruz-Tapias and de~Le\'on-Lainz
Valc\'azar~\cite{Bravetti2017,deLeon2019,deLeon2021review}. With the sign
convention of Section~\ref{sec:background}, the contact Hamiltonian vector field
is characterized by
\begin{equation}
  \iota_{X_H}\eta=-H,
  \qquad
  \iota_{X_H}d\eta=dH-R(H)\eta,
  \label{eq:contact_vector_field_reeb}
\end{equation}
where $R$ is the Reeb vector field defined by
\begin{equation}
  \eta(R)=1,\qquad \iota_R d\eta=0.
  \label{eq:reeb_def}
\end{equation}
The standard consequence is
\begin{equation}
  \mathcal{L}_{X_H}\eta=-R(H)\eta,
  \qquad
  X_H(H)=-H\,R(H).
  \label{eq:energy_decay_reeb}
\end{equation}
These identities provide the two diagnostics used below: the conformal factor
of the contact form and the scalar Hamiltonian decay law checked in the
numerical experiments.

\subsection{Specialization to the benchmarks and the duplicated space}

On the physical contact manifold with $\eta$ as in~\eqref{eq:contact_form} the
Reeb field is vertical,
\begin{equation}
  R=\frac{\partial}{\partial z},
  \label{eq:reeb_physical}
\end{equation}
so the
damped double pendulum, spherical pendulum, and Riemannian particle share the same
Reeb direction and differ only through their Hamiltonians and metrics. For
$H=E_{\mathrm{mech}}(q,p)+\gamma z$ with constant $\gamma$, $R(H)=\gamma$ and
\eqref{eq:energy_decay_reeb} specializes to
\begin{equation}
\begin{aligned}
  \mathcal{L}_{X_H}\eta &= -\gamma\eta, &
   (\Phi_H^t)^*\eta &= e^{-\gamma t}\eta,\\
  H(t) &= H(0)e^{-\gamma t}.
\end{aligned}
  \label{eq:lie_derivative_contact}
\end{equation}
The scalar relation $\dot H=-\gamma H$ is the decay diagnostic of
Section~\ref{sec:experiments}: along a damped run we compare
$\Hc(t)=\Emech(t)+\gamma z(t)$ with $\Hc(0)e^{-\gamma t}$ (the
conservative case $R(H)=0$ recovers strict preservation). The diagnostic is
gauge-fixed -- $V\to V+C$ shifts both $z(t)$ and $\Hc(0)$ -- so it is evaluated
with the potential convention used to generate the dynamics, and a decay check is
necessary but not sufficient for contact-form preservation.

The exact damping sub-flow used in the splitting is generated by
$H_\gamma=\gamma z$, giving $q\mapsto q$ and
$(p,z)\mapsto(e^{-\gamma t}p,\,e^{-\gamma t}z)$, with
\begin{equation}
  (\Phi_{H_\gamma}^t)^*\eta=e^{-\gamma t}\eta
  \label{eq:damping_lie_direct}
\end{equation}
(for $\gamma(q)$ the frozen-$q$ factor multiplying $\eta$ is $e^{-\gamma(q)t}$;
see the sub-flow~\eqref{eq:vargamma_step}, in which the momentum also picks up a
$z\,\partial_q\gamma$ deflection). The duplicated construction of
Section~\ref{sec:pihajoki} takes place on the extended contact form
$\heta=dz-p_a\,dq^a-y_a\,dx^a$, whose Reeb field is again $\partial/\partial z$.
Because the bare duplicated Hamiltonians are $z$-independent, their exact
contact flows are strict contact maps,
\begin{equation}
  (\hat\Phi_{A,\mathrm{c}}^t)^*\heta=\heta,\qquad
  (\hat\Phi_{B,\mathrm{c}}^t)^*\heta=\heta .
  \label{eq:extended_strict_contact}
\end{equation}
This identity must be read with care. The sub-steps actually composed in the
method---\eqref{eq:AB_flows} below---are not the contact flows
$\hat\Phi_{A,\mathrm{c}}^t$, $\hat\Phi_{B,\mathrm{c}}^t$
of~\eqref{eq:extended_strict_contact}: they freeze $z$, whose evolution is
instead reconstructed once per step by the Herglotz update~\eqref{eq:z_update}.
The maps~\eqref{eq:AB_flows} are the symplectic sub-flows of $\hatH_A$ and
$\hatH_B$ on $T^*Q\times T^*Q$, and they do not preserve $\heta$. Consequently
the extended contact structure plays no role in the conservative backbone and
enters none of the estimates of Section~\ref{subsec:accuracy}; those are
formulated on the physical contact manifold $(\cM,\eta)$ after projection.
Accordingly, on the diagonal embedding $\iota(q,p,z)=(q,p,q,p,z)$ one has
\begin{equation}
  \iota^*\heta=dz-2p_i\,dq^i
  \label{eq:diagonal_pullback_doubled}
\end{equation}
rather than $\eta$. The doubling in~\eqref{eq:diagonal_pullback_doubled} is a
normalization artifact and not a structural feature: replacing $\heta$ by
$dz-\tfrac12(p_a\,dq^a+y_a\,dx^a)$ leaves the Reeb field and the non-degeneracy
untouched and makes $\iota$ a strict contact embedding, $\iota^*\heta=\eta$. We
retain the unscaled convention for consistency with~\cite{Pihajoki2015} and
because only $d\heta$---the standard symplectic form on $\R^{4n}$---is used
below. Every contact-preservation statement in this paper is therefore made after
symmetric projection back to the physical variables and checked against $\eta$.

\section{The numerical problem}
\label{sec:problem}

The contact integrators we compare against in Section~\ref{sec:comparison}, and
the method of Section~\ref{sec:pihajoki}, all start from operator splitting
of the contact Hamiltonian. Whether this idea yields an explicit,
structure-preserving scheme depends on a single structural property: whether the
kinetic energy depends on the configuration. This section makes the dichotomy
precise. We first recall (Section~\ref{sec:separable}) why separable contact
Hamiltonians admit exact explicit sub-flows; we then explain
(Section~\ref{sec:nonsep}) why the same splitting breaks down for non-separable
Hamiltonians and why the standard repairs are unsatisfactory; finally
(Section~\ref{sec:benchmarks}) we introduce the three benchmark systems, the
double pendulum, the spherical pendulum, and the Riemannian particle, on which
the breakdown occurs and which drive the rest of the paper.

\subsection{Separable contact Hamiltonians: splitting is exact}
\label{sec:separable}

Consider a contact Hamiltonian of the separable form
\begin{equation}
\label{eq:separable}
  H(q,p,z) \;=\; T(p) + V(q) + \gamma(q)\, z ,
\end{equation}
in which the kinetic energy $T(p)$ depends on the momenta alone (the prototype
being $T(p)=|p|^{2}/2m$). Following the separable contact splitting
of~\cite{Bravetti2020,Zadra2021}, split
$H = H_{A}+H_{B}$
with
\begin{equation}
\label{eq:split}
  H_{A}(p) = T(p),
  \qquad
  H_{B}(q,z) = V(q) + \gamma(q)\, z .
\end{equation}

For the kinetic part, $\partial H_{A}/\partial z = 0$, so the contact Hamilton
equations for $H_{A}$ reduce to the symplectic kinetic flow with a decoupled
action update.
The momentum is conserved and the velocity $\dot q = \partial_{p}T$ is constant,
so the flow is an exact shear,
\begin{equation}
\label{eq:Aflow}
  \Phi^{h}_{A}:\;
  \begin{cases}
    q \mapsto q + h\,\partial_{p}T(p),\\
    p \mapsto p,\\
    z \mapsto z + h\,\bigl(p\cdot\partial_{p}T(p)-T(p)\bigr),
  \end{cases}
\end{equation}
explicit and exact for any $T(p)$, the action increment being the Legendre
transform $p\cdot\partial_pT-T$ dictated by~\eqref{eq:ham_z} rather than $T$
itself (for $T$ homogeneous of degree two in $p$, as for every benchmark of
this paper, Euler's identity collapses it to $h\,T(p)$; for non-quadratic $T$, such
as the relativistic $T=|p|$, the general form must be used).

Along the potential-dissipative flow of $H_{B}$ the configuration $q$ is
constant (because
$\partial H_{B}/\partial p = 0$), so $V(q)$ and $\gamma(q)$ are frozen and the
equations for $p$ and $z$ are linear ordinary differential equations solvable in
closed form. For constant $\gamma$,
\begin{equation}
\label{eq:Bflow}
\begin{aligned}
  z(h) &= e^{-\gamma h}\,z_{0} - \frac{V}{\gamma}\bigl(1 - e^{-\gamma h}\bigr),\\
  p_{i}(h) &= e^{-\gamma h}\,p_{i,0} - \frac{\partial_{i}V}{\gamma}\bigl(1 - e^{-\gamma h}\bigr).
\end{aligned}
\end{equation}
For position-dependent $\gamma(q)$ the $p$-equation acquires a non-autonomous
forcing $-(\partial_{i}\gamma)\,z(t)$, but, since $q$, and hence $\gamma$ and
$V$, remain frozen, variation of parameters still yields an exact closed-form
sub-flow (the explicit expression is recorded in
Section~\ref{sec:comparison}). Thus the isolated $H_B$ sub-flow remains
explicit; the local contact-conformal estimate for the projected non-separable method
below uses only the constant-friction case.

Combining the two exact sub-flows with the symmetric Strang composition
\begin{equation}
\label{eq:strang}
  \Phi^{\tau}_{\mathrm{split}} \;=\; \Phi^{\tau/2}_{B}\circ\Phi^{\tau}_{A}\circ\Phi^{\tau/2}_{B}
\end{equation}
gives a fully explicit, structure-preserving, second-order separable contact
splitting with no implicit solve (local truncation error $\mathcal{O}(\tau^{3})$
per step, global error $\mathcal{O}(\tau^{2})$; extensible to orders 4 and 6 by
Yoshida composition~\cite{Yoshida1990}). Because each sub-flow is the exact flow
of a contact Hamiltonian, and hence a contactomorphism, the composition is
itself a contactomorphism. In the standard mechanical case
$T(p)=\tfrac12p^\top M^{-1}p$ with constant $M$ and $\gamma\to 0$ it reduces to
the St\"ormer-Verlet (leapfrog) scheme for
$H_{0}=T+V$~\cite{Bravetti2020,Zadra2021}, while for a general separable $T(p)$
it remains a symmetric composition of exact sub-flows that should not be
identified with St\"ormer-Verlet; we reserve the term \emph{contact variational
integrator} for the discrete-Herglotz constructions
of~\cite{Vermeeren2019,Simoes2021}, a different object, derived by discretizing
the variational principle rather than by splitting the generator and
conformally contact by construction.

This regime is not narrow. It includes the damped harmonic oscillator, the
simple pendulum (whose kinetic energy $T=p_{\theta}^{2}/2m\ell^{2}$ is
independent of the angle), the Li\'enard equations~\cite{Zadra2021}, and the
Lane-Emden equation and the modified Kepler problem~\cite{Bravetti2020}.

In the present paper, the position-dependent case $\gamma(q)$
should be read as a modeling extension. As noted above, the contact-conformal
justification for the
projected method is restricted to constant $\gamma$, for which the damping
sub-flow has a scalar exponential factor and the contact decay law takes the
simple form $H(t)=H(0)e^{-\gamma t}$.

\subsection{Non-separable contact Hamiltonians: where splitting stops}
\label{sec:nonsep}

For a mechanical system whose configuration space is a Riemannian manifold
$(Q,g)$, or whose mass matrix depends on the configuration, the kinetic energy
takes the position-dependent form $T(q,p)=\tfrac12\,g^{ij}(q)\,p_{i}p_{j}$, and
the contact Hamiltonian
\begin{equation}
\label{eq:nonsep}
  H(q,p,z) \;=\; \tfrac12\, g^{ij}(q)\, p_i p_j + V(q) + \gamma(q)\, z
\end{equation}
is non-separable: positions and momenta are entangled in the kinetic term. The
potential-dissipative sub-flow $H_{B}$ is unaffected and remains exactly
solvable as in~\eqref{eq:Bflow}. The entire obstruction lies in the kinetic
sub-flow $H_{A}=T(q,p)$, whose contact Hamilton equations
\begin{equation}
\label{eq:geodesic}
  \dot q^{i} = g^{ij}(q)\, p_{j},
  \qquad
  \dot p_{i} = -\tfrac12\,\frac{\partial g^{jk}}{\partial q^{i}}\, p_{j}p_{k}
\end{equation}
are the geodesic equations of $(Q,g)$ in Hamiltonian form. Through the
configuration-dependent metric these equations couple $q$ and $p$ nonlinearly
and admit no closed-form solution for a generic metric. The explicit
shear~\eqref{eq:Aflow} is therefore unavailable, and the splitting strategy that
made the separable case so favorable no longer closes.

Two repairs are conceivable, and one of them reaches farther than a first
reading of~\eqref{eq:geodesic} suggests. One may attempt to split $T(q,p)$ further into pieces that are each
exactly integrable as contact flows. This succeeds whenever every term of a
diagonal metric has a coefficient depending only on coordinates that are frozen
along that term's own flow. Both geometric benchmarks of this paper have this
structure: for the torus particle,
$T=\tfrac12 p_\theta^2+p_\varphi^2/(2(3+\cos\theta)^2)$, the second term freezes
$\theta$ along its own flow, so its exact contact flow is available in closed
form, and likewise for the spherical pendulum with
$f(\theta)=1/(2\sin^2\theta)$. The resulting three-generator Strang composition
is fully explicit and second order; each factor is an exact
contactomorphism, so the composition pulls back $\eta$ by exactly $e^{-\gamma\tau}$. We construct
this splitting, verify it, and benchmark against it in
Section~\ref{subsec:exp-kevrekidis}; it is an instance of the exactly integrable
splittings of~\cite{Bravetti2020} after the term regrouping above, and sits in
the strict/prolonged generator class of~\cite{Kevrekidis2026}. What the repair
does not reach is a dense metric with momentum cross terms whose
coefficients move along every candidate sub-flow: for the double pendulum,
$T=(p_1^2-2\cos\Delta\,p_1p_2+2p_2^2)/(2(1+\sin^2\Delta))$ with
$\Delta=q_1-q_2$, every term transports $\Delta$, no term-wise exact flow
exists in these coordinates, and the splitting route requires the commutator
machinery of~\cite{Kevrekidis2026}, whose realizable accuracy we also measure
in Section~\ref{subsec:exp-kevrekidis}. The statement is chart-level, not
geometric: in the sum-and-difference chart $\Sigma=\theta_1+\theta_2$,
$\Delta=\theta_1-\theta_2$ the kinetic term carrying $p_\Sigma^2$ does acquire
a closed-form flow ($\Delta$ and $p_\Sigma$ are frozen along it), and the
remaining kinetic dynamics reduces to one degree of freedom, integrable by
quadratures; but quadratures are not elementary sub-flows, and no explicit
term-wise splitting results in that chart either. The non-separability that motivates this
paper is therefore the \emph{dense} case, and the double pendulum, not the
geometric benchmarks, is its flagship. Alternatively, one may integrate the kinetic
flow~\eqref{eq:geodesic} with an implicit method,
typically an implicit midpoint step in $q$. This restores second-order accuracy
and structure, but reintroduces a Newton solve inside every time step. Moreover,
applied to the full system the implicit midpoint rule has a further defect:
although it is symmetric and exhibits bounded long-time energy error, it is not
in general a contactomorphism, since its pullback of the contact form does not
reproduce the exact conformal rescaling $e^{-\gamma\tau}\eta$. We measure the
resulting one-step contact-form residual, alongside the other methods, with the
direct diagnostic of Section~\ref{subsec:contact-residual}. That defect belongs
to the rule applied to the full contact vector field, not to implicitness as
such. Used inside the splitting instead, on the kinetic map alone, the midpoint
step can be lifted to a strict contactomorphism by updating the action with its
own discrete generating function,
$z'=z+\bar p\cdot(q'-q)-\tau T(\bar q,\bar p)$ with $(\bar q,\bar p)$ the
midpoint,\footnote{For the midpoint map $q'-q=\tau T_p(\bar q,\bar p)$,
$p'-p=-\tau T_q(\bar q,\bar p)$ one checks directly that
$d\bigl[\bar p\cdot(q'-q)-\tau T(\bar q,\bar p)\bigr]=p'\,dq'-p\,dq$, so the
lifted map satisfies $\Phi_A^*\eta=\eta$ exactly.} and the Strang composition
of this lift with the exact damping-potential flow is then a second-order,
exactly conformal integrator that reaches any dense metric. A second-order
exactly-contact alternative therefore exists; it is fully implicit, with a
Newton solve of the complete $2n$-dimensional kinetic map in every step. We
construct this lifted-midpoint method and measure it in
Section~\ref{subsec:exp-kevrekidis}. What remains unavailable for dense
metrics is an explicit, or projection-solve-only, exactly-contact step; that
narrower gap is the one this paper addresses.

The fully explicit Tao-type alternative considered here adapts Tao's extended
phase-space construction~\cite{Tao2016} to the contact setting. It avoids any
implicit solve by doubling the phase space and binding the two copies with an
exactly integrable rotation of strength $\omega$. As shown in
Section~\ref{sec:comparison}, this is an accurate, convergent explicit method; its
distinguishing feature relative to the present scheme is that it advances and
reads off one copy without returning the duplicated variables exactly to the
physical constraint manifold. The methodological target here is therefore
specific: combine contact damping, duplicated non-separable flows, a Herglotz
action update, and symmetric projection so that the duplicated copies are
returned to the physical diagonal each step, requiring a projection solve rather
than a full-state nonlinear solve. The
Pihajoki-contact method of Section~\ref{sec:pihajoki} implements this
combination for the Hamiltonian class considered here.

\subsection{The benchmark systems}
\label{sec:benchmarks}

The obstruction above appears across multi-degree-of-freedom dissipative
systems on nontrivial configuration spaces; how far it can be repaired by
term-wise splitting depends on the metric structure, and for the dense case
no term-wise repair with elementary sub-flows is available.
We single out three such systems, of increasing geometric complexity,
\begin{equation*}
  \mathbb{T}^{2} \;\rightarrow \; S^{2} \;\rightarrow \; (M,g),
\end{equation*}
which serve both as motivation for the projected contact method and as the test problems of
Section~\ref{sec:experiments}. The three benchmarks test different aspects of
the obstruction and together separate the effects of coupling (the double
pendulum: coupled coordinates through a configuration-dependent mass matrix),
curvature (the spherical pendulum: a curved configuration space with a cyclic
coordinate and a dissipated Noether-Herglotz quantity), and metric dependence
(the Riemannian particle, implemented below as a torus example: a nonconstant
metric whose kinetic flow cannot be reduced to the separable shear). The full
derivations of their Lagrangians, Hamiltonians, equations of motion, and
dissipation laws are collected in the Supplementary Material; here we record
only the structural feature that renders each one non-separable.

\subsubsection{Double pendulum ($Q=\mathbb{T}^{2}$)}
With angles $\theta=(\theta_1,\theta_2)$ and $\Delta=\theta_1-\theta_2$, the
configuration-dependent mass matrix
\begin{equation}
\label{eq:massmatrix}
  M(\theta) =
  \begin{pmatrix}
    (m_{1}+m_{2})\ell_{1}^{2} & m_{2}\ell_{1}\ell_{2}\cos\Delta\\[3pt]
    m_{2}\ell_{1}\ell_{2}\cos\Delta & m_{2}\ell_{2}^{2}
  \end{pmatrix}
\end{equation}
gives the contact Hamiltonian
$H=\tfrac12 p^{\top}M^{-1}(\theta)p+V(\theta)+\gamma z$, with
$V=-(m_1+m_2)g\ell_1\cos\theta_1-m_2 g\ell_2\cos\theta_2$. The off-diagonal
coupling $\cos\Delta$ in $M^{-1}(\theta)$ makes $H$ non-separable, so the kinetic
sub-flow freezes neither $\theta$ nor $p$ and is not the explicit
shear~\eqref{eq:Aflow}; with $R(H)=\gamma$, \eqref{eq:energy_decay_reeb} gives
$\dot H=-\gamma H$. This is our principal benchmark.

\subsubsection{Spherical pendulum ($Q=S^{2}$)}
In spherical coordinates the round metric
$g=\ell^{2}(d\theta^{2}+\sin^{2}\!\theta\,d\varphi^{2})$ gives
\begin{equation}
\label{eq:sphH}
  H = \frac{p_{\theta}^{2}}{2m\ell^{2}}
    + \frac{p_{\varphi}^{2}}{2m\ell^{2}\sin^{2}\!\theta}
    + m g\ell\,(1-\cos\theta) + \gamma\, z ,
\end{equation}
non-separable in $(\theta,p_{\varphi})$ through $1/\sin^{2}\!\theta$ (with $\theta$
measured from the south pole, where $V$ vanishes). The azimuthal coordinate is
cyclic, so $p_{\varphi}$ is a dissipated Noether-Herglotz quantity obeying
$\dot p_{\varphi}=-\gamma p_{\varphi}$, and a damped trajectory spirals toward the
pole. The implementation's near-pole floor on the metric denominators and the
potential convention $V=g(1-\cos\theta)$, zero at the south pole, are recorded
in the Supplementary Material, \S\,S3; the reported runs stay at
$\sin\theta\gtrsim0.24$, far above the floor, where the Hamiltonian is smooth,
so hypothesis (A2) holds on the compact sets actually visited.

\subsubsection{Riemannian particle ($Q=(M,g)$, position-dependent friction)}
The general case is a particle on a Riemannian manifold with smooth friction
$\gamma\colon M\to\R_{>0}$,
\begin{equation}
\label{eq:riemH}
  H = \tfrac12\, g^{ij}(q)\, p_{i}p_{j} + V(q) + \gamma(q)\, z ,
\end{equation}
non-separable for any non-flat or position-dependent metric and containing the
previous systems as special cases. For non-constant $\gamma$ the momentum equation
acquires an extra deflection $z\,\operatorname{grad}_{g}\gamma$ (vanishing when
$\gamma$ is constant), and the energy obeys the path-dependent law
\begin{equation}
\label{eq:riem-energy}
  H(t) = H(0)\,\exp\!\Bigl(-\int_{0}^{t}\gamma(q(s))\,ds\Bigr);
\end{equation}
this case is realized numerically on the torus particle in
Section~\ref{subsec:exp-variable-friction}, the constant-friction benchmarks being
the special case $\gamma(q)\equiv\gamma$. Each system carries a
configuration-dependent kinetic energy~\eqref{eq:nonsep} whose kinetic
sub-flow~\eqref{eq:geodesic} is not exactly integrable, so they are exactly
where the explicit splitting of Section~\ref{sec:separable} fails and the remedies
of Section~\ref{sec:nonsep} are forced.

\section{The Pihajoki-contact method}
\label{sec:pihajoki}

\subsection{Phase-space duplication}
\label{subsec:duplication}

Following Pihajoki~\cite{Pihajoki2015}, introduce an auxiliary copy
$(x,y)\in T^*Q$ of the variables $(q,p)$ and work on the extended phase space
$\hcM=T^*Q\times T^*Q\times\R=\{(q,p,x,y,z)\}$ with contact form
\begin{equation}
  \heta = dz - p_a\,dq^a - y_a\,dx^a.
  \label{eq:ext_contact_form}
\end{equation}
Since $d\heta=-dp_a\wedge dq^a-dy_a\wedge dx^a$ is the standard symplectic form on
$\R^{4n}$, the non-degeneracy condition $\heta\wedge(d\heta)^{2n}\neq 0$ holds and
$(\hcM,\heta)$ is a contact manifold of dimension $4n+1$.

The physical dynamics lives on the constraint surface
\begin{equation}
  \cN = \ker A = \{(q,p,x,y,z)\in\hcM : q=x,\; p=y\},
  \label{eq:constraint_surface}
\end{equation}
where $A(q,p,x,y)=(q-x,p-y)\in\R^{2n}$. The bare extended Hamiltonian, with
dissipation excluded and $H_{\mathrm{bare}}=T+V$, is
\begin{equation}
  \hatH_{\mathrm{bare}}(q,p,x,y)
  = H_{\mathrm{bare}}(q,y) + H_{\mathrm{bare}}(x,p).
  \label{eq:ext_bare_H}
\end{equation}
On $\cN$ one has $\hatH_{\mathrm{bare}}|_\cN=2H_{\mathrm{bare}}(q,p)$. This
doubling of the value of the Hamiltonian does not distort the time scale,
and the reason is a property of the extended field itself rather than of any
subsequent splitting: each summand of~\eqref{eq:ext_bare_H} contains each
physical variable exactly once, so that
$\partial_{p}\hatH_{\mathrm{bare}}|_{\cN}=\partial_pH_{\mathrm{bare}}(q,p)$ and
$\partial_{q}\hatH_{\mathrm{bare}}|_{\cN}=\partial_qH_{\mathrm{bare}}(q,p)$, and
each summand generates the flow of only one copy. On the diagonal the extended
equations therefore reproduce the physical vector field at the correct
rate~\cite{Pihajoki2015,Jayawardana2023}, with no factor of $2$ to absorb, and
the same mechanism is noted for the Tao baseline in Section~\ref{subsec:tao}.

\subsection{Explicit extended flows}
\label{subsec:ext_flows}

Write $\hatH_{\mathrm{bare}}=\hatH_A+\hatH_B$ with
$\hatH_A(q,y)=H_{\mathrm{bare}}(q,y)$ and $\hatH_B(x,p)=H_{\mathrm{bare}}(x,p)$.
Because $q,y$ are constants of the $\hatH_A$-flow and $x,p$ are constants of the
$\hatH_B$-flow, both sub-flows are exactly integrable,
\begin{equation}
\begin{aligned}
  \hatPhi_A^\tau &:
  \begin{cases}
    p \mapsto p - \tau\,\nabla_q H_{\mathrm{bare}}(q,y),\\
    x \mapsto x + \tau\,\nabla_y H_{\mathrm{bare}}(q,y),
  \end{cases}\\[4pt]
  \hatPhi_B^\tau &:
  \begin{cases}
    q \mapsto q + \tau\,\nabla_p H_{\mathrm{bare}}(x,p),\\
    y \mapsto y - \tau\,\nabla_x H_{\mathrm{bare}}(x,p).
  \end{cases}
\end{aligned}
  \label{eq:AB_flows}
\end{equation}
The Strang composition
$\hatPhi^\tau=\hatPhi_A^{\tau/2}\circ\hatPhi_B^\tau\circ\hatPhi_A^{\tau/2}$ is a
second-order explicit integrator on $\hcM$, costing one evaluation of
$\nabla H_{\mathrm{bare}}$ per sub-step.

\subsection{Symmetric projection}
\label{subsec:projection}

The explicit step $\hatPhi^\tau$ does not preserve $\cN$. To enforce $q=x$,
$p=y$ exactly at every step, find
$\mu=(\mu_q,\mu_p)\in\R^{n}\times\R^{n}$ satisfying
\begin{equation}
\begin{aligned}
  G(\mu) &:= A\,\hatPhi^\tau(\zeta_n + A^T\mu) + 2\mu = 0,\\
  A^T\mu &=(\mu_q,\mu_p,-\mu_q,-\mu_p),
\end{aligned}
  \label{eq:projection_system}
\end{equation}
where $\zeta_n=(q_n,p_n,q_n,p_n)\in\cN$ is the lifted initial point. The
operator $A$ measures the gap between the two copies,
$A(q,p,x,y)=(q-x,p-y)$, while $A^T\mu$ applies equal and opposite corrections to
the two copies. Thus $\mu$ is not a new physical variable. It is the correction
that returns the duplicated variables to the physical diagonal after the
explicit extended step.

The role of the projection equation is to replace the copy-binding step of
Tao-type methods by an explicit return to the physical diagonal. One first
pre-corrects the lifted point by $A^T\mu$, takes the explicit duplicated step
$\hatPhi^\tau$, and then solves for the correction for which the two copies can
again satisfy $q=x$ and $p=y$. The projected state is obtained from
$\hat\zeta=(\hat q,\hat p,\hat x,\hat y)
=\hatPhi^\tau(\zeta_n+A^T\mu^*)$ by
\begin{equation}
  q_{n+1}=\hat q+\mu_q=\hat x-\mu_q,
  \qquad
  p_{n+1}=\hat p+\mu_p=\hat y-\mu_p .
  \label{eq:projected_components}
\end{equation}
Equivalently, on the solved constraint, $(q_{n+1},p_{n+1})=\Pi_{\mathrm{avg}}
(\hat\zeta)$, where
\begin{equation}
  \Pi_{\mathrm{avg}}(q,p,x,y):=\Bigl(\tfrac{q+x}{2},\,\tfrac{p+y}{2}\Bigr)
  \label{eq:Pi_avg}
\end{equation}
is the symmetric average of the two copies. We use $\Pi_{\mathrm{avg}}$, rather
than a read-off of either copy, throughout: the two agree on $\cN$, but only the
average has the annihilation property~\eqref{eq:avg_annihilates} below.

One qualification bounds what the phrase ``exact diagonal return'' can mean.
The equality of the two
reported copies is algebraic: the average assigns them the same value by
construction, whatever $\mu$ is, so the state handed to the next step lies on
$\cN$ exactly---in contrast with a read-off of one copy, which leaves the two to
separate and lets the gap accumulate. That is a structural difference
from the Tao baseline, and it is the sense in which the return is exact. It is
not the stronger statement that the nonlinear projection
equation~$G(\mu)=0$ has been solved exactly, which is what
Proposition~\ref{prop:conformal-symplectic} requires and which the tolerance gate
of Section~\ref{subsec:exp-active} does not always deliver. The two mechanisms
also do different work, and the experiments separate them cleanly: the algebraic
return is what distinguishes the method from the Tao read-off (it prevents the
inter-copy gap from entering the state and accumulating), while the Newton
correction is what distinguishes it from passive averaging (at coarse steps the
uncorrected average drifts secularly, Section~\ref{subsec:exp-load}, even though
it enjoys the same algebraic return). Neither mechanism substitutes for the
other, and claims about the method should name the one actually at work in the
regime under discussion.

The symmetric projection theory of \cite[Proposition~3]{Jayawardana2023}, which
adapts the symmetric-projection construction for differential equations on
manifolds~\cite{Hairer2000} to the extended-phase-space setting, guarantees
that an adaptive tolerance of order $\tau^2$ is consistent with a second-order
trajectory method. In the tested two-degree-of-freedom benchmarks the situation
is in fact more favorable; the details matter because it would be easy to
overclaim the Newton solve. The uncorrected gap of the symmetric extended step satisfies
$\|G(0)\|=\mathcal{O}(\tau^3)$---derived in the Supplementary Material,
\S\,S9, from the invariance of $\cN$ under the exact extended flow, and
confirmed numerically in Section~\ref{subsec:exp-master}---which lies below the
adaptive tolerance floor $\max(10^{-10},\tau^2)$ across the tested steps.
Consequently the projection loop terminates at its first residual evaluation with
multiplier $\mu=0$ and applies no Newton correction, and the projected point
reduces to the symmetric average $(\hat q+\hat x)/2$. In these runs we therefore
do not rely on an actual Newton solve, nor on carrying $\mu$ across steps; the
``one iteration'' reported below is a single residual check rather than a
successful correction. At coarser steps or under a tighter tolerance the gap
exceeds the floor and the Newton correction activates.

This tolerance gating separates the implemented map from the map analyzed in
Section~\ref{subsec:accuracy} by less than one might expect, for two
reasons. First, the symmetric average annihilates
the correction direction exactly: since
$A^{\!\top}\mu=(\mu_q,\mu_p,-\mu_q,-\mu_p)$, the two copies receive equal and
opposite shifts and
\begin{equation}
  \Pi_{\mathrm{avg}}\bigl(\hat\zeta+A^{\!\top}\mu\bigr)
  =\Pi_{\mathrm{avg}}(\hat\zeta)
  \qquad\text{for every }\mu,
  \label{eq:avg_annihilates}
\end{equation}
so the post-correction never moves the projected point; the multiplier
acts only through the pre-correction inside $\hatPhi^\tau$. The gated
($\mu=0$) and exactly projected ($\mu=\mu^*$) maps therefore differ by
$\mathcal{O}(\tau^3)$, the size of $\mu^*$ itself, and not more. Second, and as a
consequence of~\eqref{eq:avg_annihilates}, time-symmetry is not lost when the
gate fires: the leading defect in
$\Phi^{-\tau}_{\mathrm{num}}\circ\numstep-\mathrm{id}$ is proportional to
$\Pi_{\mathrm{avg}}A^{\!\top}G(0)=0$. We verify this directly: on the double
pendulum at $\gamma=0.1$ the measured symmetry defect stays at the roundoff level
($10^{-14}$--$10^{-9}$) across $\tau\in[0.005,0.08]$, in both the gated and the
active regimes, i.e.\ three to eight orders of magnitude below $\tau^3$. The
structural hypothesis~(S) of Lemma~\ref{lem:order-conditions} is thus met by the
implemented step, not only by its idealization.

The conservative backbone of the projected step is symplectic: by Theorem~5
of~\cite{Jayawardana2023}, the projected map
$(q_n,p_n)\mapsto(q_{n+1},p_{n+1})$ is symplectic on $T^*Q$. We quote this result
and do not re-prove it here; it is the only conservative-structure result we
borrow, and it supplies the symplectic middle step used in the consistency
estimate of Section~\ref{subsubsec:contact_consistency}. The same backbone is
now known to preserve linear and quadratic invariants of the underlying system
as well~\cite{Ohsawa2023}; we do not use that property here, since the contact
damping breaks the invariants of the benchmarks considered, but it is inherited
by the conservative limit $\gamma\to0$ of the present step.

\subsection{Contact dissipation and action update}
\label{subsec:dissipation}

The dissipation $\gamma z$ is handled by momentum half-steps derived from the
flow of $H_\gamma=\gamma z$: that flow is
$(q,p,z)\mapsto(q,e^{-\gamma s}p,e^{-\gamma s}z)$, and the half-steps used here
retain its momentum factor while dropping the $z$-factor, which is carried
instead by the Herglotz update~(v) below (see
Section~\ref{subsec:algorithm} for the one place this distinction is
load-bearing). The full step is
\begin{equation}
  \Phi^\tau
  = \Ctau(\tau/2) \circ \Phi^\tau_{\mathrm{Pih}} \circ \Ctau(\tau/2),
  \label{eq:full_step}
\end{equation}
where $\Ctau(s):(q,p)\mapsto(q,e^{-\gamma s}p)$.

The action variable $z$ satisfies
$\dot z=p\cdot\dot q-H_{\mathrm{bare}}-\gamma z$. Freezing the Lagrangian at the
symmetric midpoint
$L_{\mathrm{mid}}=p_{\mathrm{mid}}\cdot\nabla_p H_{\mathrm{bare}}|_{\mathrm{mid}}
-H_{\mathrm{bare}}(q_{\mathrm{mid}},p_{\mathrm{mid}})$ and integrating the
resulting constant-coefficient ordinary differential equation exactly,
\begin{equation}
  z_{n+1}
  = z_n\,e^{-\gamma\tau}
  + L_{\mathrm{mid}}\,\frac{1-e^{-\gamma\tau}}{\gamma}.
  \label{eq:z_update}
\end{equation}
For $\gamma\to 0$ this reduces to $z_{n+1}=z_n+\tau L_{\mathrm{mid}}$. The
midpoint quadrature contributes a local truncation error of
$\mathcal{O}(\tau^3)$, consistent with second-order accuracy.

\subsection{Accuracy and one-step contact-form residuals}
\label{subsec:accuracy}

\subsubsection{Local contact-conformal consistency}
\label{subsubsec:contact_consistency}

We formulate the one-step contact-form consistency estimate as a chain of formal
results: a Lemma establishing via the implicit function theorem that the
projection multiplier depends smoothly on the initial state; a Proposition showing
that the constant-friction step rescales the symplectic form $\omega=d\eta$ by
the exact conformal factor $e^{-\gamma\tau}$; and a further Proposition bounding
the contact-form residual by $\mathcal{O}(\tau^3)$, whose $C^1$ local consistency
input is here derived (Lemma~\ref{lem:order-conditions}) from
time-symmetry, consistency, and smoothness rather than assumed. The
results remain local and one-step; none implies exact or global contact
preservation of the full one-form $\eta$.

\paragraph{Notation.}
The projection multiplier $\mu^*(u,\tau)$ solves~\eqref{eq:projection_system},
which we write as
\begin{equation}
  G(\mu;\,u,\tau)
  \;=\;
  A\,\hatPhi^{\tau}\!\bigl(\zeta(u)+A^{\!\top}\mu\bigr)+2\mu
  \;=\;0,
  \label{eq:G-explicit}
\end{equation}
where $u=(q,p)$, $\zeta(u)=(q,p,q,p)\in\cN$ is the lifted initial state, and
$\mu=(\mu_q,\mu_p)\in\mathbb{R}^{2n}$.  The conservative projected step on
$(q,p)$ is
\[
  \Phi^\tau_{\mathrm{Pih}}(u)
  \;=\;
  \Pi\bigl(\hatPhi^{\tau}(\zeta(u)+A^{\!\top}\mu^*(u,\tau))\bigr),
\]
where $\Pi=\Pi_{\mathrm{avg}}$ is the symmetric average
of~\eqref{eq:Pi_avg} (on the solved constraint the two copies agree, so the
average coincides with either copy).  The full one-step contact map on
$(q,p,z)$, which is the map compared to the exact flow $\Phi^\tau_H$ in
Proposition~\ref{prop:contact-consistency}, is
\[
  \Phi^\tau_{\mathrm{num}}
  \;=\;
  \Ctau(\tau/2)\circ\Phi^\tau_{\mathrm{Pih}}\circ\Ctau(\tau/2),
\]
closed by the Herglotz action update of Section~\ref{subsec:dissipation}.
The damping half-steps $\Ctau(\tau/2):(q,p)\mapsto(q,e^{-\gamma\tau/2}p)$ act on the
momenta only, the action variable being advanced solely by the Herglotz update
\eqref{eq:z_update}; they are linear, hence smooth with uniformly bounded
Jacobians on compact sets, so the Jacobian
bound of Lemma~\ref{lem:projection-smoothness}(iii) extends to
$\Phi^\tau_{\mathrm{num}}$ by the chain rule.

\begin{lemma}[Local smoothness of the projection multiplier]
\label{lem:projection-smoothness}
Let $H_{\mathrm{bare}}\in C^2(K)$ on a compact set $K\subset T^*Q$, so that
$\hatPhi^{\tau}$ is $C^1$ in state and $\tau$ near $\zeta(K)$.  Then:
\begin{enumerate}
  \item[\normalfont(i)] At $\tau=0$ the unique solution of~\eqref{eq:G-explicit}
    is $\mu^*=0$, and the derivative
    $D_\mu G\big|_{\mu=0,\,\tau=0}=AA^{\!\top}+2I=4I_{2n}$
    is nonsingular.
  \item[\normalfont(ii)] After possibly shrinking $\tau_0=\tau_0(K)>0$,
    $D_\mu G$ is nonsingular uniformly on $K\times[-\tau_0,\tau_0]$; by the
    implicit function theorem, \eqref{eq:G-explicit} then defines a locally
    unique $C^1$ solution $\mu^*:K\times[-\tau_0,\tau_0]\to\mathbb{R}^{2n}$.
    The interval is two-sided, which is what the time-symmetry argument of
    Lemma~\ref{lem:order-conditions} requires; this is available because
    $\hatPhi^\tau$ is an explicit composition of sub-steps polynomial in $\tau$
    and is therefore defined for $\tau<0$.
  \item[\normalfont(iii)] The conservative projected map $\Phi^\tau_{\mathrm{Pih}}$
    and its spatial Jacobian $D_u\Phi^\tau_{\mathrm{Pih}}$ are bounded uniformly
    on $K\times[-\tau_0,\tau_0]$.
\end{enumerate}
\end{lemma}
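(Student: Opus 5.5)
The argument is a direct application of the implicit function theorem to $G(\mu;u,\tau)$ in \eqref{eq:G-explicit}, organized around the three items.

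For (i), first observe that at $\tau=0$ each sub-flow in \eqref{eq:AB_flows} is the identity. Hence $\hatPhi^{0}=\mathrm{id}$ and $G(\mu;u,0)=A(\zeta(u)+A^{\!\top}\mu)+2\mu$. Since $\zeta(u)\in\cN=\ker A$, this reduces to $G(\mu;u,0)=(AA^{\!\top}+2I)\mu$. A direct computation from $A^{\!\top}\mu=(\mu_q,\mu_p,-\mu_q,-\mu_p)$ gives $AA^{\!\top}\mu=(2\mu_q,2\mu_p)$, so $AA^{\!\top}=2I_{2n}$ and $G(\mu;u,0)=4\mu$. Its unique zero is therefore $\mu^*=0$, and $D_\mu G|_{\mu=0,\tau=0}=4I_{2n}$ is invertible, uniformly in $u$.

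For (ii), note that $\hatPhi^\tau$ is a composition of three maps of the form $\mathrm{id}+\tau F$, with $F$ built from $\nabla H_{\mathrm{bare}}$. Under $H_{\mathrm{bare}}\in C^2$, $F$ is $C^1$, so $\hatPhi^\tau$ is $C^1$ jointly in state and $\tau$. This holds for $\tau$ of either sign, since the formulas are polynomial in $\tau$; that is what makes the interval two-sided. Consequently $G$ is $C^1$ in $(\mu,u,\tau)$ near the compact set $\{0\}\times K\times\{0\}$.

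Continuity of $D_\mu G$ then does the work. Because $K$ is compact, there are a neighbourhood $\|\mu\|\le r$ and a $\tau_0>0$ on which $\|D_\mu G-4I\|\le 2$, so $D_\mu G$ is uniformly invertible with $\|(D_\mu G)^{-1}\|\le 1/2$. I would not apply the implicit function theorem pointwise and then patch the results together. Instead I would rewrite the equation as the fixed-point problem $\mu=\mu-\tfrac14 G(\mu;u,\tau)$, whose map is a contraction with constant $\le 1/2$ on the ball $\|\mu\|\le r$. Shrinking $\tau_0$ so that $\|G(0;u,\tau)\|\le r$ on $K\times[-\tau_0,\tau_0]$ (possible since $G(0;u,0)=0$ and $G$ is uniformly continuous on the compact set) ensures the ball is mapped into itself. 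The contraction mapping principle then yields a unique $\mu^*(u,\tau)$ in the ball. The local implicit function theorem at each point shows this solution is $C^1$, and uniqueness in the ball makes the local solutions agree with the global one. I expect this step, securing uniformity in $u$ over $K$ rather than mere pointwise local existence, to be the main obstacle, and the contraction argument is how I would handle it.

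For (iii), $\Phi^\tau_{\mathrm{Pih}}(u)=\Pi(\hatPhi^\tau(\zeta(u)+A^{\!\top}\mu^*(u,\tau)))$ is a composition of $C^1$ maps: the linear map $\Pi$, $\hatPhi^\tau$, and the affine lift together with $\mu^*$. It is therefore continuous on the compact set $K\times[-\tau_0,\tau_0]$, hence bounded there. For the Jacobian, the chain rule gives
$D_u\mu^*=-(D_\mu G)^{-1}D_uG$,
and both factors are bounded: the first by the uniform inverse bound from (ii), the second by continuity of $D_uG$ on the compact set. The chain rule then bounds $D_u\Phi^\tau_{\mathrm{Pih}}$ uniformly, after possibly replacing $K$ by a slightly larger compact neighbourhood so that all arguments of $\hatPhi^\tau$ remain in its domain of $C^1$ control.
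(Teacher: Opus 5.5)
Your proposal is correct and follows the paper's route. The paper likewise computes $G(\mu;u,0)=(AA^{\!\top}+2I)\mu=4\mu$, applies the implicit function theorem on the two-sided interval (justified by the sub-steps being polynomial in $\tau$), and uses compactness for the uniform bounds in (iii); your contraction-mapping treatment of (ii) simply makes explicit the uniformity in $u$ over $K$ that the paper compresses into ``compactness gives the uniform bounds.''
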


\begin{proof}[Proof]
At $\tau=0$ one has $G(\mu;u,0)=4\mu$, so $\mu^*=0$ and
$D_\mu G=4I_{2n}$; the implicit function theorem on the two-sided interval
(available because $\hatPhi^\tau$ is polynomial in $\tau$) gives the branch,
and compactness the uniform bounds. The algebra is written out in the
Supplementary Material, \S\,S6.
\end{proof}

The behavior of the active Newton solve in Section~\ref{subsec:exp-active}
(convergence in two iterations at every tested step, with no failures) is
consistent with the nonsingularity conclusion of~(i)-(ii): the Jacobian stays
within $\mathcal{O}(\tau)$ of $4I_{2n}$ and is well conditioned by construction
at the tested step sizes.

\paragraph{Assumptions for the Proposition.}
\begin{enumerate}
  \item[(A1)] \textit{Constant friction.} $\gamma>0$ is constant.
  \item[(A2)] \textit{Hamiltonian smoothness.} $H_{\mathrm{bare}}\in C^5(K)$
    on a compact set $K\subset T^*Q$.
  \item[(A3)] \textit{Exact flow.} The exact contact flow $\Phi^\tau_H$ is
    well defined for $\tau\le\tau_0$ and initial data in $K$.
  \item[(A4)] \textit{Projection smoothness.} The conclusion of
    Lemma~\ref{lem:projection-smoothness} holds; in particular
    $D_u\Phi^\tau_{\mathrm{num}}$ is bounded uniformly on $K\times[-\tau_0,\tau_0]$.
\end{enumerate}

Unlike the earlier formulation, $C^1$ local consistency is not assumed
here: it is derived below (Lemma~\ref{lem:order-conditions}) from time-symmetry,
first-order consistency, and smoothness of the step, all of which the
construction supplies. For the derivative-level estimate we use the higher-order
form of Lemma~\ref{lem:projection-smoothness}: if $H_{\mathrm{bare}}\in C^5(K)$
then the constraint residual $G$ is $C^4$ in $(\mu,u,\tau)$ and the implicit
function theorem yields $\mu^*\in C^4(K\times[-\tau_0,\tau_0])$, so $\numstep$ is $C^4$
jointly in $(u,\tau)$. Throughout we fix a local Darboux chart with
$\omega=\sum_i dq^i\wedge dp_i$, represented by
$J=\bigl(\begin{smallmatrix}0&I_n\\-I_n&0\end{smallmatrix}\bigr)$, so that
$\omega(v,w)=v^\top Jw$ and $d\eta=\omega$.

\subsection{Exact conformal symplecticity of the constant-friction step}
\label{subsec:conformal-symplectic}

The conservative backbone is symplectic in the physical variables: with the
projection solved exactly, the Jayawardana-Ohsawa symmetric projection makes
$\Pihstep$ a symplectic map of $(q,p)$, i.e.\
$(D_u\Pihstep)^\top J\,(D_u\Pihstep)=J$ (Theorem~5 of~\cite{Jayawardana2023}). The
two contact-damping half-steps $\Ctau(\tau/2):(q,p)\mapsto(q,e^{-\gamma\tau/2}p)$
are linear, with constant Jacobian $\operatorname{diag}(I_n,e^{-\gamma\tau/2}I_n)$.
These two facts already determine the exact transformation law of $\omega$ under
the full step.

\begin{proposition}[Exact conformal symplecticity, constant friction]
\label{prop:conformal-symplectic}
Assume \textnormal{(A1)} and that the projection is solved exactly, so that
$\Pihstep$ is symplectic on $(q,p)$. Then the constant-friction step
$\numstep=\Ctau(\tau/2)\circ\Pihstep\circ\Ctau(\tau/2)$ satisfies, for every
$u\in K$ and every $\tau\le\tau_0(K)$,
\begin{equation}
\label{eq:exact-conformal}
\begin{aligned}
  \bigl(D_u\numstep\bigr)^{\!\top}J\,\bigl(D_u\numstep\bigr)
    &= e^{-\gamma\tau}\,J,\\
  \text{equivalently}\qquad
  (\numstep)^{*}\omega &= e^{-\gamma\tau}\,\omega .
\end{aligned}
\end{equation}
The constant-friction step reproduces the conformal rescaling of $\omega=d\eta$
exactly, for every step size, with no error term.
\end{proposition}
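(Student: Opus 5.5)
The plan is a direct chain-rule computation on the Jacobian of the three-factor composition. Write $D_c:=\operatorname{diag}(I_n,e^{-\gamma\tau/2}I_n)$ for the constant Jacobian of $\Ctau(\tau/2)$, and $S(v):=D_u\Pihstep(v)$ for the Jacobian of the conservative projected step at $v$. By the chain rule, at $u\in K$,
\[
  D_u\numstep(u)=D_c\,S\bigl(\Ctau(\tau/2)u\bigr)\,D_c .
\]
The half-step Jacobian $D_c$ is constant precisely because of (A1), so no $q$-dependence of the damping factor enters. Then
\[
  (D_u\numstep)^{\!\top}J\,(D_u\numstep)
  =D_c^{\top}S^{\top}\bigl(D_c^{\top}JD_c\bigr)S\,D_c .
\]
The key elementary identity is
\[
  D_c^{\top}JD_c=e^{-\gamma\tau/2}J .
\]
This holds because $J$ has only off-diagonal $I_n$ blocks, each multiplied by exactly one factor $e^{-\gamma\tau/2}$ coming from the momentum block.

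Using this identity once in the middle, then the symplecticity $S^{\top}JS=J$ quoted from Theorem~5 of~\cite{Jayawardana2023} (valid since the projection is solved exactly), and then the identity a second time, the whole expression collapses to
\[
  e^{-\gamma\tau/2}\,D_c^{\top}JD_c=e^{-\gamma\tau}J .
\]
This is the matrix form of~\eqref{eq:exact-conformal}. The pullback form follows from $\omega(v,w)=v^\top Jw$ in the fixed Darboux chart.

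Two bookkeeping points remain. First, I must check that $\Ctau(\tau/2)u$ stays in the domain where Lemma~\ref{lem:projection-smoothness} makes $\mu^*$, and hence $\Pihstep$, well defined and $C^1$. If necessary, I will enlarge $K$ to a compact set containing the momentum rescalings $e^{-\gamma s}p$ for $|s|\le\tau_0/2$, and shrink $\tau_0$ accordingly. Second, the $z$ variable and the Herglotz update play no role, since the statement concerns $\omega=d\eta$ on $(q,p)$ only.

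The main obstacle is not the algebra but the hypothesis that $\Pihstep$ is exactly symplectic on $(q,p)$. This rests on the cited Jayawardana--Ohsawa theorem, applied to the particular symmetric-average projection $\Pi_{\mathrm{avg}}$ and the Strang duplicated step used here. I would verify that our $G$ and $\Pi_{\mathrm{avg}}$ match their setting: the symmetric projection with $A^\top\mu$ corrections, and a symplectic extended map with respect to $d\heta$, which holds since~\eqref{eq:AB_flows} are exact symplectic flows. I would also stress that the result is exact only at $\mu=\mu^*$, not for the gated map.
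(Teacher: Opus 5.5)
Your proposal is correct and matches the paper's proof: the factorization $D_u\numstep=D_b\,S\,D_b$, the block identity $D_b^{\top}JD_b=e^{-\gamma\tau/2}J$, and the symplecticity of $S$ (Theorem~5 of~\cite{Jayawardana2023}, taken as the exact-projection hypothesis) combine so that the two conformal factors multiply to $e^{-\gamma\tau}$. Your extra check that $\Ctau(\tau/2)u$ stays in the region where $\mu^*$ is defined is a domain detail the paper leaves implicit.
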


\begin{proof}[Proof]
With $D_b=\operatorname{diag}(I_n,e^{-\gamma\tau/2}I_n)$ one has
$D_b^{\top}JD_b=e^{-\gamma\tau/2}J$, conformal factors multiply under
composition, and $D_u\numstep=D_b\,S\,D_b$ with $S$ symplectic by Theorem~5
of~\cite{Jayawardana2023}; the block computation is written out in the
Supplementary Material, \S\,S7.
\end{proof}

The rescaling $(\numstep)^{*}\omega=e^{-\gamma\tau}\omega$ is the defining
property of a \emph{conformally symplectic} map: conformal Hamiltonian systems
and their structure-preserving integrators, in which a symplectic map is
composed with a constant momentum rescaling, are
classical~\cite{McLachlanPerlmutter2001,McLachlanQuispel2002}. We state the
result as a proposition rather than a theorem because the identity is already
in print.
Theorem~3.3 of Fran\c{c}a et al.~\cite{Franca2020} proves exactly this
computation (a symplectic Jacobian sandwiched between conformal rescalings
yields the factor $e^{-\gamma h}$) and closes by observing that ``the same would
be true for any type of composition whose overall time step add up to $h$'',
which covers the arrangement used here. Moreover the three-term composition
itself, damping half-step $\circ$ symplectic Strang $\circ$ damping half-step, is
Algorithm~3.1 of Modin and S\"oderlind~\cite{ModinSoderlind2011}, given there for
mechanical systems on a Riemannian configuration manifold with Rayleigh damping.
Proposition~\ref{prop:conformal-symplectic} is therefore a known identity applied
to a new backbone, and we claim no more for it. What is specific to the present method is not that identity but that
the projected, non-separable contact step of
Section~\ref{subsec:projection}, with configuration-dependent kinetic energy and
the nonlinear diagonal return solved at each step, realizes it exactly,
for every step size, the friction entering only through the linear damping factor
$D_b$. Proposition~\ref{prop:conformal-symplectic} is thus best read as
identifying which classical structure the construction preserves, not as a new
preservation law; the contribution is the construction that delivers it for the
non-separable contact-Herglotz class.

\begin{remark}
Equation~\eqref{eq:exact-conformal} concerns $d\eta=\omega$: it is the statement
that the symplectic content of the contact structure is rescaled by the exact
conformal factor. It is not a statement about the full one-form $\eta$;
the residual in the action ($z$) direction is the $\mathcal{O}(\tau^3)$
quantity bounded in Proposition~\ref{prop:contact-consistency} below. The
identity is exact in the exactly-projected (active) regime; when the projection
is inactive ($\mu^*=0$ to within tolerance), $S$ is symplectic only to
$\mathcal{O}(\tau^3)$ and~\eqref{eq:exact-conformal} holds up to
$\mathcal{O}(\tau^3)$. The boundedness of the long-time energy reported in
Section~\ref{subsec:exp-load} is the dynamical signature of this
conformal-symplectic structure.
\end{remark}

This exact rescaling is directly observable when the projection is active,
i.e.\ with the adaptive $\tau^2$ floor disabled so that the Newton solve
returns a nonzero $\mu^*$ at every step (with the floor enabled the solver
would accept $\mu^*=0$ and the exactly-projected hypothesis would not be
exercised). On a stiff damped fast-rotating torus ($\gamma=0.1$, tolerance
$10^{-10}$, floor disabled; $\|\mu^*\|$ from $2.3\times10^{-2}$ down to
$2.4\times10^{-5}$) the best-fit conformal factor of $D_u\numstep$ agrees
with $e^{-\gamma\tau}$ to the $\sim\!10^{-9}$ finite-difference floor at
every step (Table~\ref{tab:conformal-scaling}). This confirms the two-form
identity~\eqref{eq:exact-conformal} and only that identity; the residual of
the full one-form $\eta$ is the separate $\mathcal{O}(\tau^3)$ quantity of
Proposition~\ref{prop:contact-consistency}. Driver and data are in the
reproducibility archive.

\begin{table}[tbp]
\caption{Numerical confirmation of the two-form identity~\eqref{eq:exact-conformal}
(Proposition~\ref{prop:conformal-symplectic}) on the fast-rotating torus with the
projection driven active ($\gamma=0.1$, fixed tolerance $10^{-10}$, $\tau^2$ floor
disabled so $\mu^*\neq0$). The best-fit conformal factor $c$ of the
step Jacobian $D_u\numstep$ matches $e^{-\gamma\tau}$ to the finite-difference
floor; ``rel.\ defect'' is $\|(D_u\numstep)^\top J\,D_u\numstep
-e^{-\gamma\tau}J\|/\|e^{-\gamma\tau}J\|$. This is the exact rescaling of
$\omega=d\eta$ only, not preservation of the full contact one-form $\eta$.}
\label{tab:conformal-scaling}
\centering
\small
\begin{tabular}{ccccc}
\toprule
$\tau$ & $\|\mu^*\|$ & measured $c$ & $e^{-\gamma\tau}$ & rel.\ defect \\
\midrule
0.08 & $2.32\times10^{-2}$ & 0.99203191 & 0.99203191 & $3.04\times10^{-9}$ \\
0.05 & $4.42\times10^{-3}$ & 0.99501248 & 0.99501248 & $3.62\times10^{-9}$ \\
0.03 & $7.96\times10^{-4}$ & 0.99700449 & 0.99700450 & $2.97\times10^{-9}$ \\
0.02 & $2.14\times10^{-4}$ & 0.99800200 & 0.99800200 & $1.27\times10^{-9}$ \\
0.01 & $2.42\times10^{-5}$ & 0.99900050 & 0.99900050 & $1.13\times10^{-9}$ \\
\bottomrule

\end{tabular}
\end{table}

\subsection{Structural properties of the step, and the $C^1$ consistency estimate}
\label{subsec:order-conditions}

The derivative-level consistency the contact-form estimate requires is not an
independent hypothesis: it follows from three structural properties of the
composed step, established for the present method in the Supplementary Material,
\S\,S9. \textnormal{(S)} \emph{Time-symmetry},
$\Phi^{-\tau}_{\mathrm{num}}\circ\numstep=\mathrm{id}$: the spatial map is a
palindromic composition of the symmetric projected backbone with the damping flow
($\Ctau(-s)=\Ctau(s)^{-1}$), and the Herglotz update is evaluated at the symmetric
midpoint with homogeneous factor $e^{-\gamma\tau}$. \textnormal{(C)}
\emph{First-order $C^1$ consistency}: the leading generator is the contact vector
field $X_H$ at the level of both the map and its spatial Jacobian (using
$\mu^*(\cdot,0)=0$ with $\mu^*\in C^1$). \textnormal{(R)} \emph{Smoothness}: under
$H_{\mathrm{bare}}\in C^5(K)$ the value and Jacobian defect families
$\tau\mapsto\numstep-\Hflow$ and $\tau\mapsto D_u\numstep-D_u\Hflow$ are $C^3$ in
$\tau$ with third $\tau$-derivatives bounded uniformly on $K\times[-\tau_0,\tau_0]$.

\begin{lemma}[$C^1$ consistency from symmetry and consistency]
\label{lem:order-conditions}
Let a one-step map and the exact flow both be time-symmetric, identity at
$\tau=0$, and first-order $C^1$-consistent, and let their value and Jacobian
defect families be $C^3$ in $\tau$ with third $\tau$-derivatives bounded by $M$
on $K\times[-\tau_0,\tau_0]$. Then there is $C_1\ge0$ with
$\bigl|\numstep-\Hflow\bigr|_{C^1(K)}\le C_1\tau^3$ for $0<\tau\le\tau_0$, where
$|\cdot|_{C^1(K)}$ controls both the pointwise map values and the spatial
Jacobians uniformly on $K$; that is, the $C^1$ local consistency bound holds with
$C_1$ depending only on $M$.
\end{lemma}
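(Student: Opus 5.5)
The plan is a Taylor expansion of the defect family in $\tau$, using time-symmetry to kill the even-order term. Set $E(\tau;u):=\numstep(u)-\Hflow(u)$, and treat its spatial Jacobian $D_uE(\tau;u)$ in parallel. By hypothesis both are $C^3$ in $\tau$ with third derivatives bounded by $M$ on $K\times[-\tau_0,\tau_0]$. Taylor's theorem with Lagrange remainder at $\tau=0$ gives
\[
E(\tau)=E(0)+\tau E'(0)+\tfrac{\tau^2}{2}E''(0)+R_3,
\qquad |R_3|\le \tfrac{M}{6}|\tau|^3 ,
\]
and the same expansion holds for $D_uE$. The task is therefore to show that the coefficients $E(0)$, $E'(0)$, $E''(0)$ vanish, together with their spatial derivatives, uniformly on $K$.

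The zeroth and first coefficients come directly from the hypotheses. Both maps are the identity at $\tau=0$, so $E(0)=0$ and $D_uE(0)=0$. First-order $C^1$ consistency says both maps have generator $X_H$ at the level of values and Jacobians, so $E'(0)=0$ and $D_uE'(0)=0$. The latter also follows by differentiating the value statement in $u$, because joint $C^4$ regularity lets $\partial_\tau$ and $D_u$ commute.

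The second coefficient is the main step; it uses symmetry. Write $\numstep=\mathrm{id}+\tau X+\tfrac{\tau^2}{2}B+\mathcal{O}(\tau^3)$, where $X=X_H$ by (C). Expand $\Phi^{-\tau}_{\mathrm{num}}\circ\numstep=\mathrm{id}$ to second order in $\tau$. The composition of the two expansions gives
\[
\mathrm{id}+\tau(X-X)+\tau^2\bigl(B-(DX)X\bigr)+\mathcal{O}(\tau^3),
\]
where the cross term is $-\tau\,DX\cdot\tau X$ coming from evaluating $-\tau X$ at the shifted point. Symmetry forces $B=(DX)X$. This second coefficient is determined by $X$ alone, and the identical computation for the symmetric exact flow yields the same $B$. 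Hence $E''(0)=0$ pointwise on $K$. Differentiating this identity in $u$, again using that $E$ is $C^3$ in $\tau$ jointly with smoothness in $u$, gives $D_uE''(0)=0$. Alternatively, one can apply the same symmetry expansion to the Jacobian cocycle $D\Phi^{-\tau}(\Phi^\tau)\,D\Phi^\tau=I$.

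Collecting the three vanishing coefficients leaves only the remainder. This gives $|E|\le\tfrac{M}{6}\tau^3$ and $|D_uE|\le\tfrac M6\tau^3$ uniformly, so $C_1=M/6$ up to the norm convention.

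I expect the main obstacle to be the rigor of the second-order step. The composition expansion needs the $\mathcal{O}(\tau^3)$ remainders of $\Phi^{-\tau}$ evaluated at $\numstep(u)$ to be controlled uniformly. That point may leave $K$, so I would state the hypotheses on a slightly enlarged compact neighborhood, or invoke (A4) and Lemma~\ref{lem:projection-smoothness}(iii) to keep iterates in a compact set for small $\tau_0$. Second, the Jacobian version must not lose a derivative. I would handle this by noting that $\tau$-derivatives of $D_uE$ up to order three are assumed bounded, so the coefficient identities, once proved pointwise for values, transfer to Jacobians by differentiating in $u$ at fixed $\tau=0$. This is legitimate because $\numstep$ is $C^4$ jointly.
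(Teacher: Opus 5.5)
Your proposal is correct and follows essentially the paper's route: a third-order Taylor expansion of the value and Jacobian defect families in $\tau$, in which the identity at $\tau=0$ and first-order consistency remove the zeroth and first coefficients, and time-symmetry forces the second coefficient of both maps to equal $(DX_H)X_H$, so it cancels and only the $M\tau^3/6$ remainder survives. The uniform control of $\Phi^{-\tau}$ at $\Phi^{\tau}(u)$ that you flag as the main obstacle is not needed, because the second-coefficient identity follows pointwise by differentiating $\Phi^{-\tau}\circ\Phi^{\tau}=\mathrm{id}$ twice at $\tau=0$.
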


Time-symmetry forces the second $\tau$-jet of each defect family to be determined
by the first, which consistency makes vanish; a third-order Taylor estimate then
gives the bound (full argument, including the Jacobian case, in the Supplementary
Material, \S\,S8). Properties (S),(C),(R) hold for the present method (\S\,S9), so
Lemma~\ref{lem:order-conditions} supplies the $C^1$ bound as a consequence of the
construction rather than a hypothesis.

\begin{proposition}[Local one-step contact-form consistency]
\label{prop:contact-consistency}
Under Assumptions \textnormal{(A1)-(A4)} and the structural properties
\textnormal{(S),(C),(R)} of the step, the contact-form residual of the
Pihajoki-contact step satisfies
\begin{equation}
  \sup_{u\in K}
  \bigl\|(\numstep)^*\eta\big|_u-e^{-\gamma\tau}\eta\big|_u\bigr\|
  \;\le\; C(K)\,\tau^3,
  \label{eq:contact_pres}
\end{equation}
for $0<\tau\le\tau_0(K)$, where the norm $\|\cdot\|$ is the operator norm on one-forms induced by the
Euclidean norm in the local Darboux chart, and $C(K)$ depends on
$\|H_{\mathrm{bare}}\|_{C^5(K)}$, $\gamma$, the dimension $n$, and the projection
bounds of Lemma~\ref{lem:projection-smoothness}.
\end{proposition}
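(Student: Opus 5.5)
The plan is to compare the pullback under the numerical step with the pullback under the exact flow, and to use the exact conformal identity $(\Hflow)^*\eta=e^{-\gamma\tau}\eta$ from~\eqref{eq:lie_derivative_contact} as the reference. Write the full map as $\numstep(q,p,z)=(Q,P,Z)$ and the exact flow as $\Hflow(q,p,z)=(Q^e,P^e,Z^e)$. Then
\[
(\numstep)^*\eta-e^{-\gamma\tau}\eta=(\numstep)^*\eta-(\Hflow)^*\eta
= \bigl(dZ-dZ^e\bigr)-\bigl(P\,dQ-P^e\,dQ^e\bigr),
\]
and the second bracket splits as $(P-P^e)\,dQ+P^e\,(dQ-dQ^e)$. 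Each term is a product of a difference of values or a difference of spatial Jacobians with a factor that is bounded on $K$. The bounded factors are $P^e$ (by (A3) and compactness) and $dQ$ (by (A4) and Lemma~\ref{lem:projection-smoothness}(iii)). So in the Euclidean operator norm of the Darboux chart, the residual is bounded by a constant times $|\numstep-\Hflow|_{C^1(K)}$. The components to control are the $q$, $p$ and $z$ components and their first derivatives with respect to $(q,p,z)$.

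\textbf{Step 1: the $(q,p)$ components.} Lemma~\ref{lem:order-conditions}, with (S), (C) and (R) supplied by the construction, gives the $C^1$ bound $C_1\tau^3$ for these components directly. For the constant $\gamma$ of (A1), the $(q,p)$ part of both maps is independent of $z$, so no $z$-derivatives appear in this step.

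\textbf{Step 2: the $z$ component.} The update~\eqref{eq:z_update} gives $Z=z\,e^{-\gamma\tau}+L_{\mathrm{mid}}(u,\tau)(1-e^{-\gamma\tau})/\gamma$. The exact flow has the same structure: since $\dot z=p\cdot H_p-H_{\mathrm{bare}}-\gamma z$, variation of constants gives $Z^e=z\,e^{-\gamma\tau}+\int_0^\tau e^{-\gamma(\tau-s)}\mathcal L(u(s))\,ds$. The $z$-dependence, and hence $\partial_z$, is therefore identical in the two maps, and the difference reduces to a quadrature defect in $u$ alone. I will apply Lemma~\ref{lem:order-conditions} to this scalar defect family. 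Its three hypotheses hold as follows:
\begin{itemize}
\item Symmetry holds because the midpoint is symmetric and the weight is homogeneous, which is (S).
\item Consistency holds because $L_{\mathrm{mid}}\to\mathcal L(u)$ as $\tau\to0$, which is (C).
\item Smoothness holds because $H_{\mathrm{bare}}\in C^5$, so $L_{\mathrm{mid}}$ is $C^4$, which is (R).
\end{itemize}
The lemma then gives $|Z-Z^e|_{C^1(K)}\le C\tau^3$.

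\textbf{Step 3: assembling the constant.} Collecting the three estimates gives the bound~\eqref{eq:contact_pres}. The constant $C(K)$ depends only on $\|H_{\mathrm{bare}}\|_{C^5}$ (through $M$ and through the bounds on $P^e$ and on $dQ$), on $\gamma$, on $n$ (through norm equivalences), and on the projection bounds.

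\textbf{Main obstacle.} I expect Step 2 to be the hardest part. The symmetry hypothesis has to be checked for the $z$ defect alone. The exact $z$-flow is symmetric as a map on $(q,p,z)$, but the midpoint state $(q_{\mathrm{mid}},p_{\mathrm{mid}})$ has to be defined, for instance as the average of the input and output after the damping half-steps, so that reversing $\tau$ reproduces the same midpoint. If that fails, the fallback is a direct Taylor expansion. The midpoint rule for $\int_0^\tau e^{-\gamma(\tau-s)}\mathcal L\,ds$ has a local defect of $\mathcal O(\tau^3)$ uniformly with its $u$-derivatives, because the second $s$-derivative of $e^{-\gamma(\tau-s)}\mathcal L(u(s))$ is bounded in $C^1$ on $K$ under (A2). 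The only remaining point is that $u_{\mathrm{mid}}$ differs from the exact $u(\tau/2)$ by $\mathcal O(\tau^2)$ in $C^1$, and this discrepancy enters multiplied by $\tau$, so it contributes at order $\tau^3$.
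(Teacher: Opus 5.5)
Your proposal is correct and follows the paper's proof. You write the residual as the difference of the pullbacks under $\numstep$ and $\Hflow$ using $(\Hflow)^*\eta=e^{-\gamma\tau}\eta$, split it into a value-difference term times a bounded Jacobian (A4) and a Jacobian-difference term times a bounded factor from the exact image, and control both by the $C^1$ bound of Lemma~\ref{lem:order-conditions}. Your componentwise Step~2 is more than needed: (S),(C),(R) are hypotheses on the full $(q,p,z)$ step, and time-symmetry is a property of the map rather than of its scalar $z$-component, so a single application of the lemma to the full map already covers $z$; your direct Taylor fallback is nonetheless valid, since the exponential update differs from the plain midpoint rule only by $\mathcal{O}(\tau^3)$.
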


The proof writes the residual as the difference of pullbacks
$(D\Psi)^\top(\eta\circ\Psi)$ for $\Psi\in\{\numstep,\Hflow\}$, uses the exact
conformal identity $(\Hflow)^*\eta=e^{-\gamma\tau}\eta$, and bounds the two
resulting terms by the $C^1$ consistency bound of Lemma~\ref{lem:order-conditions}
and the uniform Jacobian bound (A4); the full estimate is in the Supplementary
Material, \S\,S10. The estimate is derivative-level: it controls the pullback
$(D\numstep)^\top\eta\circ\numstep$, a strictly stronger test than scalar energy
decay or trajectory error, which is why the cubic scaling of $\rhoeta$ observed
in Section~\ref{subsec:contact-residual} is consistent
with~\eqref{eq:contact_pres} without implying global preservation. The
estimate's mechanism is not specific to $\eta$: the proof uses only
that the one-form is smooth with bounded derivative on $K$, so the same
$\mathcal{O}(\tau^3)$ bound holds for any smooth one-form and, through
Lemma~\ref{lem:order-conditions}, for any time-symmetric consistent second-order
method---the implicit midpoint baseline included, consistent with the reading of
$\rhoeta$ adopted in Section~\ref{subsec:contact-residual}. What is specific to
the present construction is not this bound but that it holds for the
non-separable contact-Herglotz class while the same step rescales $\omega=d\eta$
by the exact factor of
Proposition~\ref{prop:conformal-symplectic}.

\begin{remark}[Scope]
\label{rem:scope}
The contact-form analysis is unconditional on its structural side: projection
regularity (Lemma~\ref{lem:projection-smoothness}), the pullback estimate
(Proposition~\ref{prop:contact-consistency}), and the reduction of $C^1$
consistency to (S),(C),(R) (Lemma~\ref{lem:order-conditions}) are proved in the
Supplementary Material, and Proposition~\ref{prop:conformal-symplectic} establishes
that the symplectic part is preserved exactly. The results remain local
and one-step (Section~\ref{subsec:contact-residual} measures accumulated drift
separately); no global contactomorphism is claimed; constant $\gamma$ enters
through the exact identity $(\Hflow)^*\eta=e^{-\gamma\tau}\eta$, with $\gamma(q)$
adding $d\gamma$ terms not covered here. The open task is an all-orders (contact
backward-error) analysis of the projected step.
\end{remark}

A remark on the $z$-bookkeeping is in order, since it is fixed at the order of the
estimate. Throughout, in both the derivation above and the algorithmic
realization of Section~\ref{subsec:algorithm}, the damping half-steps $\Ctau$
rescale the momentum only, and the action variable is advanced solely by the
exact update~\eqref{eq:z_update}, whose homogeneous factor $e^{-\gamma\tau}$
already carries the full contact damping of $z$ across the step. This is an
exponential-integrator update, obtained by freezing the Lagrangian at the
symmetric midpoint and integrating the resulting linear equation exactly; it is
not the quadrature that a literal Strang splitting of $H_{\mathrm{bare}}+\gamma z$
would produce, namely $z\mapsto e^{-\gamma\tau}z+e^{-\gamma\tau/2}\tau
L_{\mathrm{mid}}$. The two share the same homogeneous factor and differ only in
the source term, by
\begin{equation}
  L_{\mathrm{mid}}\Bigl(\tfrac{1-e^{-\gamma\tau}}{\gamma}
  - e^{-\gamma\tau/2}\tau\Bigr)
  = L_{\mathrm{mid}}\,\gamma^{2}\tau^{3}
    \bigl(\tfrac16-\tfrac18\bigr)+\mathcal{O}(\tau^{4})
  = \mathcal{O}(\tau^{3}),
  \label{eq:z_bookkeeping_gap}
\end{equation}
which is exactly the order of the residual bounded in
Proposition~\ref{prop:contact-consistency}. The choice of $z$-update is therefore
not immaterial at this order, and we use~\eqref{eq:z_update} consistently in the
derivation, in the implementation, and in the estimate.

\subsection{Algorithmic summary}
\label{subsec:algorithm}

The complete step assembles the components of the preceding subsections into a
symmetric, Strang-type composition of three pieces, a half-step of contact
damping, a full step of the projected conservative flow, and a second half-step
of damping, closed by a Herglotz update of the action variable in which the
linear $z$ equation is integrated exactly with the midpoint source held fixed.
We state
the constant-friction case $H=H_{\mathrm{bare}}+\gamma z$, which underlies all the
benchmarks below and for which the damping enters the $(q,p)$ variables as the
scalar momentum rescaling $\Ctau$ of Section~\ref{subsec:dissipation}. To
avoid a natural misreading: $\Ctau$ is not the flow of
$H_\gamma=\gamma z$, whose conformal law~\eqref{eq:damping_lie_direct} rescales
the full one-form and requires $z\mapsto e^{-\gamma s}z$ as well. The action
factor is instead supplied by step~(v) below, so that the composed step realizes
the same damping of $(p,z)$ while keeping $\Ctau$ linear in $(q,p)$ alone.

A single
step $(q_k,p_k,z_k)\mapsto(q_{k+1},p_{k+1},z_{k+1})$ consists of: (i) a half-step
of contact damping, $p\leftarrow p_k\,e^{-\gamma\tau/2}$; (ii) lifting the
physical point to the diagonal, $\zeta=(q_k,p,q_k,p)\in\cN$, and solving the
projection equation $G(\mu)=0$ of~\eqref{eq:projection_system} for
$\mu=(\mu_q,\mu_p)$ by Newton iteration started from $\mu=0$; in the tested
benchmarks the initial residual already lies below the projection tolerance, so
this reduces to a single residual check with $\mu=0$ and no correction
(Section~\ref{subsec:projection});
(iii) applying the explicit extended map
$(\hat q,\hat p,\hat x,\hat y)=\hatPhi^{\tau}(\zeta+A^{T}\mu)$
of~\eqref{eq:AB_flows} and projecting back by the symmetric
average~\eqref{eq:Pi_avg},
$q\leftarrow\tfrac12(\hat q+\hat x)$, $p\leftarrow\tfrac12(\hat p+\hat y)$;
(iv) a second damping half-step,
$p\leftarrow p\,e^{-\gamma\tau/2}$; and (v) the action update from the midpoint
Lagrangian $L_{\mathrm{mid}}=p_{\mathrm{mid}}\cdot\nabla_pH_{\mathrm{bare}}-H_{\mathrm{bare}}$,
evaluated at $(q_{\mathrm{mid}},p_{\mathrm{mid}})=\tfrac12(q_k+q,\,p_k+p)$,
\begin{equation}
  z_{k+1}=z_k\,e^{-\gamma\tau}+L_{\mathrm{mid}}\,\frac{1-e^{-\gamma\tau}}{\gamma},
  \label{eq:action-update}
\end{equation}
which reduces to $z_{k+1}=z_k+\tau L_{\mathrm{mid}}$ as $\gamma\to0$.  The new
physical state is $(q_{k+1},p_{k+1})=(q,p)$.

The nominal cost per step is one projection stage in the $2n$ multiplier
variables together with the explicit sub-flow evaluations; in the inactive
benchmark regime this stage reduces to a single residual check with $\mu=0$,
while tighter tolerances activate the Newton correction. The comparison with a
full-state nonlinear solve depends on Jacobian construction, gradient
evaluations, tolerance, iteration counts, reuse of information, implementation
and system dimension, so the $2n$ versus $2n+1$ variable count is not by itself a
computational advantage. Section~\ref{subsec:exp-referee} makes the point
concrete: with the projection active the step costs more gradient evaluations
than implicit midpoint on the same benchmark, not fewer.

The step as stated, together with the local contact-conformal estimate of
Section~\ref{subsubsec:contact_consistency}, is restricted to constant friction
$\gamma$. For position-dependent friction $\gamma=\gamma(q)$ the outer Strang
structure is retained, with $\gamma$ frozen at the configuration of each damping
half-step, but steps~(i), (iv) and~(v) are replaced: both the momentum and the
action bookkeeping are then carried by a separate, exactly integrable
variable-friction sub-flow~\eqref{eq:vargamma_step}, and the action update loses
its homogeneous factor accordingly~\eqref{eq:z_update_vargamma}. The two
conventions are set out and contrasted in
Section~\ref{subsec:exp-variable-friction}; they must not be mixed. The
variable-friction sub-flow includes the
friction-gradient deflection term $-z\,\partial_q\gamma$ that vanishes for
constant $\gamma$. This sub-flow and its numerical test are described in
Section~\ref{subsec:exp-variable-friction} and derived in the Supplementary
Material. The variable-friction case is thus included as a numerical extension of
the scheme: the resulting map is no longer covered by the constant-friction
contact-conformal estimate, which would require the additional
$d\gamma$-dependent terms noted in Section~\ref{sec:limitations}, and we do not
claim it as part of that result.

\section{Comparison methods}
\label{sec:comparison}

The numerical comparisons use a controlled family of methods representing the
main trade-offs for contact Hamiltonian problems.  The separable contact
splitting is the natural exact-sub-flow method when its hypotheses hold.
The Pihajoki-contact method applies the projected duplicated-phase-space
construction to the physical non-separable system.  The Tao-type contact
adaptation is a fully explicit extended-space baseline.  Implicit midpoint is a
symmetric implicit reference, and classical RK4 is a standard non-geometric
explicit reference.

All methods are applied to the same contact Hamiltonian vector field whenever
that is meaningful.  What differs is which geometric structure is preserved,
whether preservation is imposed on the physical variables or only on duplicated
extended variables, whether a projection solve or penalty coupling is used,
whether the nonlinear solve acts on the full state, and whether the method is
explicit, semiexplicit, or implicit.

\subsection{Contact splitting for separable systems}
\label{subsec:splitting-comparison}

For Hamiltonians of the form~\eqref{eq:separable}, the comparison method is
the Strang contact splitting~\eqref{eq:strang}.  Both sub-flows are available in
closed form: the kinetic sub-flow is the shear map~\eqref{eq:Aflow}, while the
potential-dissipative sub-flow is linear because $q$ is fixed during that
sub-step.  For constant $\gamma$ the latter is given by~\eqref{eq:Bflow}.  We
also use the exact position-dependent version for the modeling discussion of
Riemannian friction.  Let
\[
  \gamma_0=\gamma(q),\quad
  V_0=V(q),\quad
  \gamma_i=\partial_i\gamma(q),\quad
  V_i=\partial_i V(q),
\]
all evaluated at the fixed configuration of the sub-step. For $\gamma_0\neq0$,
\begin{align}
  z(h)
  &= z_0 e^{-\gamma_0 h}
     - \frac{V_0}{\gamma_0}\bigl(1-e^{-\gamma_0 h}\bigr),
     \label{eq:Bflow-variable-z}\\
  p_i(h)
  &= e^{-\gamma_0 h}p_{i,0}
     - \frac{V_i}{\gamma_0}\bigl(1-e^{-\gamma_0 h}\bigr) \notag\\
  &\quad
     - \gamma_i\biggl[
       h e^{-\gamma_0 h}\Bigl(z_0+\frac{V_0}{\gamma_0}\Bigr) \notag\\
  &\qquad\qquad
       - \frac{V_0}{\gamma_0^2}\bigl(1-e^{-\gamma_0 h}\bigr)
     \biggr].
     \label{eq:Bflow-variable-p}
\end{align}
The limit $\gamma_0\to0$ is obtained by expanding the exponentials.  Therefore, the
separable splitting is a fully explicit contactomorphism and is the natural
baseline for the damped harmonic oscillator and
other systems whose kinetic energy is configuration independent.

Higher-order variants are produced by Yoshida composition of the second-order
map~\cite{Yoshida1990}.  These are useful for separable problems, but they do
not remove the non-separability obstruction: once the kinetic energy is
$T(q,p)=\tfrac12 g^{ij}(q)p_i p_j$, the kinetic sub-flow itself ceases to be
an explicit shear.  For non-separable kinetic energies, separable
contact splitting is therefore not the appropriate benchmark method.

\subsection{Tao-type extended-space baseline}
\label{subsec:tao}

Tao's method restores explicitness for non-separable Hamiltonians by evolving two
copies $(q,p)$ and $(x,y)$ of the phase space and binding them with an exactly
integrable rotation~\cite{Tao2016}. We take this as our explicit comparison
baseline, with the binding rotation included; it is this rotation that controls
the divergence of the two copies. The conservative backbone acts on the doubled
space with the augmented Hamiltonian
\begin{equation}
\begin{aligned}
  \hat H_{\rm Tao}(q,p,x,y) &= H_{\rm bare}(q,y) + H_{\rm bare}(x,p)\\
    &\quad + \frac{\omega}{2}\bigl(\|q-x\|^2+\|p-y\|^2\bigr),
\end{aligned}
  \label{eq:tao-contact-H}
\end{equation}
The augmented Hamiltonian equals $2H_{\rm bare}$ on the diagonal ($x=q$,
$y=p$), with the factor of $2$ absorbed because each sub-flow advances only one
copy at a time; the Strang composition of the three sub-flows recovers the
correct physical time scale, analogous to the factor-of-$2$ noted after
\eqref{eq:ext_bare_H} for the Pihajoki construction. The augmented Hamiltonian
is split into the explicit flows of $H_{\rm bare}(q,y)$ and $H_{\rm bare}(x,p)$
(each a kick/drift on one copy) and the binding term, whose flow is the exact
rigid rotation
\begin{equation}
  \begin{pmatrix}q-x\\ p-y\end{pmatrix}\mapsto
  \begin{pmatrix}\cos 2\omega\tau & \sin 2\omega\tau\\
                 -\sin 2\omega\tau & \cos 2\omega\tau\end{pmatrix}
  \begin{pmatrix}q-x\\ p-y\end{pmatrix},
  \label{eq:tao-rotation}
\end{equation}
with $q+x$ and $p+y$ invariant. Because the binding step is solved in closed
form, the coupling strength $\omega$ imposes no $\omega\tau=\mathcal{O}(1)$
stability restriction, and the rotation keeps the two copies together rather than
merely penalizing their separation. A symmetric (Strang) composition of the three
flows is an explicit, second-order, extended-space symplectic step; for the
contact case we add dissipation through the same constant-friction damping half-steps
and Herglotz action update used by the projected method
(Section~\ref{sec:pihajoki}).

With the rotation in place, the baseline is a convergent second-order method whose
energy error is set by the time step rather than by the coupling. On the
conservative double pendulum at $T=2$, holding $\tau=1.25\times10^{-3}$ fixed and
varying $\omega$ over the stable range $5\le\omega\le20$, the maximum energy drift
stays in the narrow band $\max|\Delta H|\in[4.4,8.5]\times10^{-5}$ (the five
values span a factor of $8.5/4.4\approx1.9$), varying non-monotonically with
$\omega$ rather than along any clean power law; the product
$\max|\Delta H|\,\omega^2$ is
therefore far from constant, growing by an order of magnitude across the range
(the full $\omega$/$\tau$ scan is tabulated in the Supplementary Material).
Refining the step instead is decisive: at
$\omega=20$, taking $\tau$ from $5\times10^{-3}$ to $6.25\times10^{-4}$ reduces
$\max|\Delta H|$ from $8.3\times10^{-4}$ to $1.3\times10^{-5}$, a fitted slope of
$2.0$. A rotation-free penalty variant (dropping the rotation
\eqref{eq:tao-rotation}) instead shows a spurious $\omega^{-2}$ energy floor and
can diverge, which is why we benchmark against the full construction; so equipped,
the Tao baseline reaches accuracy comparable to the projected method
(Section~\ref{subsec:exp-master}). The damped Tao-type adaptation enters the
quantitative comparisons in exactly one place: the damped structural diagnostics
of Section~\ref{subsec:exp-referee} (Supplementary Material, Table~S6;
$\gamma=0.05$),
where its corruption of the cyclic-momentum decay is the point under test. It is
not used in the damped convergence or decay-law benchmarks, whose damped
references are the refined RK4 solutions of
Section~\ref{subsec:numerical-setup}.

\subsection{Implicit midpoint and RK4}
\label{subsec:implicit-rk4}

The implicit midpoint rule is used as the geometric reference for non-separable
systems. If $u=(q,p,z)$ and $X_H(u)$ denotes the contact Hamiltonian vector
field, the update is
\begin{equation}
  u_{n+1}=u_n+\tau X_H\!\left(\frac{u_n+u_{n+1}}{2}\right).
  \label{eq:implicit-midpoint}
\end{equation}
It is symmetric and second-order accurate, and in the conservative limit it has
the familiar bounded-energy behavior of symplectic midpoint. Its cost is a
nonlinear solve in all $2n+1$ contact variables at every step.  By contrast,
the Pihajoki-contact projection solve acts on $2n$ projection variables.  The
work-precision comparison below is therefore framed cautiously: it compares
implementations in low-dimensional benchmarks rather than claiming an
asymptotic cost advantage independent of solver details.

Classical RK4 is included as a non-geometric reference. It is explicit and fourth
order, so it can have small short-time trajectory error, and in the refined
benchmarks below its direct contact-form residual is correspondingly small. What
it lacks is a discrete contact argument: unlike a contact method, RK4 does not
reproduce the conformal transformation law
$\Phi^*\eta=e^{-\gamma\tau}\eta$ by construction.
Table~\ref{tab:method-properties} summarizes these trade-offs.

\begin{table}[tbp]
\caption{Qualitative comparison of the methods used in the experiments.}
\label{tab:method-properties}
\centering
\footnotesize
\renewcommand{\arraystretch}{1.15}
\setlength{\tabcolsep}{3pt}
\begin{tabular}{@{}lllll@{}}
\toprule
\parbox[t]{1.6cm}{Method} & \parbox[t]{0.9cm}{Type} & \parbox[t]{0.8cm}{Order} & \parbox[t]{1.1cm}{Newton dim.} & \parbox[t]{2.4cm}{Contact behavior} \\
\midrule
\parbox[t]{1.6cm}{Separable splitting} & \parbox[t]{0.9cm}{expl.} & \parbox[t]{0.8cm}{2} & \parbox[t]{1.1cm}{none} & \parbox[t]{2.4cm}{exact contact composition (separable only)} \\
\addlinespace[0.3em]
\parbox[t]{1.6cm}{Pihajoki-contact} & \parbox[t]{0.9cm}{semi} & \parbox[t]{0.8cm}{2} & \parbox[t]{1.1cm}{$2n$} & \parbox[t]{2.4cm}{local conformal estimate, const.\ $\gamma$; algebraic diagonal return} \\
\addlinespace[0.3em]
\parbox[t]{1.6cm}{Tao-type} & \parbox[t]{0.9cm}{expl.} & \parbox[t]{0.8cm}{2} & \parbox[t]{1.1cm}{none} & \parbox[t]{2.4cm}{extended-space symplectic; no diagonal return} \\
\addlinespace[0.3em]
\parbox[t]{1.6cm}{Implicit midpoint} & \parbox[t]{0.9cm}{impl.} & \parbox[t]{0.8cm}{2} & \parbox[t]{1.1cm}{$2n{+}1$} & \parbox[t]{2.4cm}{symmetric reference; no contact guarantee} \\
\addlinespace[0.3em]
\parbox[t]{1.6cm}{RK4} & \parbox[t]{0.9cm}{expl.} & \parbox[t]{0.8cm}{4} & \parbox[t]{1.1cm}{none} & \parbox[t]{2.4cm}{non-geometric} \\
\bottomrule
\end{tabular}
\end{table}

\section{Numerical experiments}
\label{sec:experiments}

\subsection{Numerical setup and diagnostics}
\label{subsec:numerical-setup}

The numerical study separates three diagnostics: trajectory accuracy, scalar
contact Hamiltonian decay, and direct contact-form residuals.  Energy decay is
necessary for the contact dynamics, but by itself it is not sufficient to
establish contact-conformal behavior.  For damped systems we distinguish the
mechanical energy $\Emech$, the action contribution $\gamma z$, and the full
contact Hamiltonian
\[
  \Hc(t)=\Emech(t)+\gamma z(t).
\]
For constant $\gamma$, the scalar decay law tested in the bookkeeping plots is
$\Hc(t)=\Hc(0)e^{-\gamma t}$.

All quantitative runs use the fixed initial data and parameters tabulated in the
Supplementary Material, with $g=9.81$ (units chosen so that all quantities are
dimensionless) and unit masses and
lengths unless stated otherwise.  The non-separable trajectory references are
refined RK4 solutions with reference step
$\tau_{\mathrm{ref}}=\tau_{\min}/32\approx 1.95\times10^{-5}$, where $\tau_{\min}$
is the finest step of the refinement grid. Because the reference is itself a
refined RK4 solution, the reported RK4 endpoint errors are measured against the
method's own refinement and are correspondingly optimistic (self-consistent);
they should be read as a check that the refinement has converged, not as a
quality ranking against the structure-preserving methods, whose errors are
measured against the same RK4 reference. The reference itself is verified in
two independent ways: an eightfold refinement of its step, and agreement with a
sixth-order extended-phase-space composition from a different method family.
All four verification figures lie below the smallest method error compared
against the reference by at least a factor of $350$, so reference error does
not contaminate the comparison (full figures and driver in the Supplementary
Material, \S\,S12). The observed-order columns, which are in any case
insensitive to a consistent reference offset, and the structural diagnostics
below carry the method comparison.  We distinguish three trajectory-error measures, all computed
from the same computed and reference final states. The full endpoint error
\[
  e_{qpz}=\bigl\|(q,p,z)_{\mathrm{num}}-(q,p,z)_{\mathrm{ref}}\bigr\|_2
\]
is the Euclidean ($L^2$) norm over the full contact state, including the action
variable $z$; ``endpoint trajectory error'' (equivalently ``endpoint error'' and
``trajectory error'') refers to $e_{qpz}$ unless stated otherwise, and is the
default column. Because $z$ is advanced by the method-specific Herglotz action
update, $e_{qpz}$ folds the action error into the physical phase-space error. We
therefore also report, where useful, the physical endpoint error
$e_{qp}=\|(q,p)_{\mathrm{num}}-(q,p)_{\mathrm{ref}}\|_2$ and the action endpoint
error $e_z=|z_{\mathrm{num}}-z_{\mathrm{ref}}|$, so that
$e_{qpz}^2=e_{qp}^2+e_z^2$ and the reader can see whether convergence is set by
the mechanical variables or by the action. The convergence and damped
active-projection tables list $e_{qp}$ and $e_z$ alongside $e_{qpz}$ where this
separation is useful.

These trajectory measures must not be conflated with the structural diagnostics,
which are scalar quantities testing geometric structure rather than agreement
with a reference trajectory: the ``Hamiltonian error'' is the conservation defect
$\max_t|\Hc(t)-\Hc(0)|$ of the contact Hamiltonian at $\gamma=0$, the
contact-decay residual measures departure from the law $\Hc(0)e^{-\gamma t}$ at
$\gamma\neq0$, and the contact-form residual $\rhoeta$ of
Section~\ref{subsec:contact-residual} measures the geometric (one-form) defect.
None of these is an integration error in $(q,p,z)$, $(q,p)$, or $z$, and a small
energy, decay, or contact-form diagnostic does not by itself imply small
trajectory error -- the former test structure, the latter tests numerical
agreement with the reference trajectory.  The Newton
tolerance is $10^{-10}$
and the maximum Newton count is 30; the symmetric projection uses the adaptive
tolerance $\max(10^{-10},\tau^2)$ consistent with the second-order trajectory
regime of Section~\ref{subsec:projection}, and the projection and midpoint
solves use finite-difference Jacobians.  A few diagnostics deviate
deliberately from these settings: the contact-form residual diagnostics
(Section~\ref{subsec:contact-residual}) tighten the tolerances, the
accumulated variant also replacing the $\tau^2$ projection floor by a $\tau^3$
floor so that the intrinsic $\mathcal{O}(\tau^3)$ residual is exposed rather
than masked, and the active-projection benchmark of
Section~\ref{subsec:exp-active} disables the floor entirely to activate the
Newton solve. The exact per-diagnostic settings are recorded in the
Supplementary Material, \S\,S12, and in the metadata accompanying each result
file.  Timing data are wall-clock medians over three repeats on the same
workstation and should be read as relative within-run timings; as a
machine-independent cost measure we additionally report gradient and
vector-field evaluation counts, modeled as the method's fixed per-step
operation count times the number of steps (counting conventions and the
instrumented double-pendulum comparison in \S\,S12 and Supplementary
Table~S13); they are independent of hardware and of the
finite-difference Jacobian cost, which dominates wall time for the implicit and
projected solves. The generated result files and per-run metadata
backing every table and figure are provided in the accompanying code archive and
regenerated by a single documented driver script; the full benchmark derivations
are given in the Supplementary Material.
Throughout, ``endpoint'' refers to the endpoint
trajectory error against the refined RK4 reference and ``drift'' to the indicated
Hamiltonian or diagonal-constraint drift diagnostic.

\subsection{Direct contact-form residual diagnostic}
\label{subsec:contact-residual}

The diagnostic follows from the coordinate pullback: for a step $\Phi^\tau$ the
one-step conformal defect against the exact law
$(\Phi_H^\tau)^*\eta=e^{-\gamma\tau}\eta$~\eqref{eq:lie_derivative_contact} is
\[
  \rhoeta(\tau)=\bigl\|(D\Phi^\tau(u))^{T}\eta(\Phi^\tau(u))-e^{-\gamma\tau}\eta(u)\bigr\|_2 ,
\]
the discrete counterpart of~\eqref{eq:contact_pres}, with $D\Phi^\tau$ from central
differences ($\varepsilon=10^{-6}$; the long-time oscillator run of
Section~\ref{subsec:exp-longtime} uses $\varepsilon=5\times10^{-3}$, exact there as
the flow is quadratic). Over $\tau\in\{0.08,\dots,0.005\}$ the Pihajoki-contact
residual scales as $\mathcal{O}(\tau^3)$ on all three systems (per-system log-log
slopes $3.00,2.99,3.00$; Supplementary Material, Table~S12(a)). At $\gamma=0$
this is a consistency-rate
diagnostic, not a contact-specific one: any order-$r$ method has
$\rhoeta=\mathcal{O}(\tau^{r+1})$, so implicit midpoint (at the $\sim10^{-11}$ FD
floor) and RK4 (fourth order) attain a smaller one-step residual than the
projected method. The contact-structural distinction is shown separately by the
long-time separable experiment of Section~\ref{subsec:exp-longtime} (zero per-step
defect) and the bounded energy of the symplectic backbone.

To make the distinction from trajectory accuracy more explicit, we also
evaluate the same defect for the full composed map $\Phi^{t}$, with the
Jacobian $D\Phi^{t}$ evaluated at the fixed initial point, for the damped
double pendulum at $T=5$, $\gamma=0.1$, and $\tau=0.05$ (Supplementary
Material, Table~S12(b)). On this finite-time chaotic benchmark the accumulated
residual mixes two effects: the genuine per-step structural defect of the
local consistency estimate~\eqref{eq:contact_pres}, and the amplification of
ordinary trajectory error through $D\Phi^t$; the values must therefore not be
read as a structural ranking. In particular the Pihajoki-contact entry is the
largest in Table~S12(b). Part of this is accounted for by trajectory-error
amplification: RK4 is fourth order and tracks the exact flow, and hence its
conformal Jacobian, more closely at this $\tau$. We must record that this
explanation does not cover the whole table. Implicit midpoint is second-order
accurate, as the projected method is, and its endpoint error on this benchmark
is comparable (slightly larger, in the refinement study of
Section~\ref{subsec:exp-master}), yet its accumulated residual is
$3.00\times10^{-3}$ against the projected method's $8.04\times10^{-2}$,
smaller by a factor of about $27$. An order-counting argument therefore cannot
explain the midpoint column, and we do not have a structural explanation for
it; we report it rather than leave it to be recovered from the table. What we
take from the accumulated residual is consequently only this negative and
narrow reading, and no claim that $\rhoeta$ favors the projected method (on
this diagnostic it does not, against either reference): even for a high-order
trajectory method the contact-form defect is not constrained to vanish, and
over the run it accumulates measurably (up to $2.92\times10^{-4}$ for RK4 and
$8.04\times10^{-2}$ for the lower-order projected trajectory), so $\rhoeta(t)$
detects a structural quantity distinct from endpoint trajectory error. The
cleaner diagnostics are the one-step residual, which removes the
trajectory-error amplification (though its rate is set by each method's
accuracy order, so it reflects consistency rather than a contact-specific
property), and the separable long-time experiment of
Section~\ref{subsec:exp-longtime}, where the splitting is an exact contact map
so that $\rhoeta(t)$ measures structure alone.

The separable damped harmonic oscillator serves only as a sanity check for the
exact separable splitting, the action update, and the contact Hamiltonian
bookkeeping; it is not evidence for the non-separable method, and its validation
is reported in the Supplementary Material. The main test of the proposed method
begins with non-separable Hamiltonians, where the kinetic sub-flow is not an
explicit shear.

\subsection{Core convergence and work precision: the double pendulum}
\label{subsec:exp-doublependulum}

The double pendulum is the principal non-separable test because the mass
matrix~\eqref{eq:massmatrix} couples the two angles through
$\cos(\theta_1-\theta_2)$.  In the conservative case ($\gamma=0$), over the
refinement grid
$\tau\in\{0.005,0.0025,0.00125,0.000625\}$ at $T=1$, the Pihajoki-contact
endpoint trajectory error falls from $2.74\times10^{-4}$ to
$4.28\times10^{-6}$, with consecutive observed orders $2.00$, $2.00$, $2.00$
(Table~\ref{tab:geometry-convergence}).  The
fitted log--log slope over the grid is $2.00$, and the contact Hamiltonian error
decreases at the same second-order rate.  The grid is confined to
$\tau\le5\times10^{-3}$ to keep every point inside the asymptotic regime of this
chaotic benchmark, where the observed order is meaningful; this is a choice of
refinement window and not a stability limit. The stability threshold of the
explicit sub-flow is much coarser and is reported separately in
Section~\ref{subsec:exp-stability}.

This conservative refinement isolates the spatial discretization error, with the
damping switched off so that no action-update error enters; it is the cleanest
test of the second-order rate, but on its own it does not establish convergence
of the damped scheme. We therefore repeat the same refinement on the damped
double pendulum ($\gamma=0.1$, $T=1$), using the identical reproducible driver
with only the friction coefficient changed. The Pihajoki-contact endpoint error
falls from $2.48\times10^{-4}$ at $\tau=0.005$ to $3.87\times10^{-6}$ at
$\tau=6.25\times10^{-4}$, with consecutive observed orders $2.00$, $2.00$, $2.00$,
so the method keeps its second-order convergence with damping active and the
Herglotz action update included. The paper thus reports both a conservative and a
damped convergence check for the double pendulum.

The work-precision benchmark compares endpoint trajectory error against cost for
the conservative double pendulum, using both wall-clock time and the
machine-independent gradient-evaluation count.  RK4 has favorable short-time
order but is not structure preserving.  Implicit midpoint and Pihajoki-contact
are both second order in this test. In the inactive fine-step regime the
projected step carries no Newton correction and is therefore explicit, so it does
less per-step work than any implicit reference: at $\tau=6.25\times10^{-4}$ it
reaches endpoint error $4.28\times10^{-6}$ with $14400/0$ gradient/RHS
evaluations, against $6.03\times10^{-6}$ with $28800/14400$ for implicit midpoint
(work-precision sweep; Supplementary Material). This is the generic explicit-versus-implicit
gap rather than a method-specific advantage: the explicit Tao baseline ties the
projected step on measured gradient evaluations in this regime. The one cost
comparison with implicit midpoint that is structural rather than an
artifact of explicitness, is the analytic-Jacobian active solve reported below,
where the projection solves the smaller $2n$ system in a single Newton iteration
against midpoint's $2n+1$ in two; we make no broader work-precision claim over
implicit midpoint.

Representative error-vs-effort points from this work-precision sweep are
tabulated in the Supplementary Material; effort is measured primarily by
gradient/RHS evaluations, with median wall-clock time as a secondary
implementation-dependent timing.

The dissipative bookkeeping distinguishes mechanical energy from the full
contact Hamiltonian.  In the damped run at $\gamma=0.1$ over $T=6$,
$\Emech(t)$ decreases and the action contribution $\gamma z(t)$ grows as energy
is transferred to the action variable, while the full contact Hamiltonian
$\Hc(t)=\Emech(t)+\gamma z(t)$ tracks $\Hc(0)e^{-\gamma t}$ throughout
(Fig.~\ref{fig:dp-energy-bookkeeping}), with
maximum absolute contact-decay residual $7.50\times10^{-5}$ over the full
interval. The diagnostic is gauge-fixed in the sense of
Section~\ref{sec:reeb-lie}: it is evaluated with the same potential convention
used to generate the dynamics, and the explicit per-system potential
conventions are recorded in the Supplementary Material, \S\,S11.

\begin{figure}[tbp]
\centering
\includegraphics[width=\linewidth]{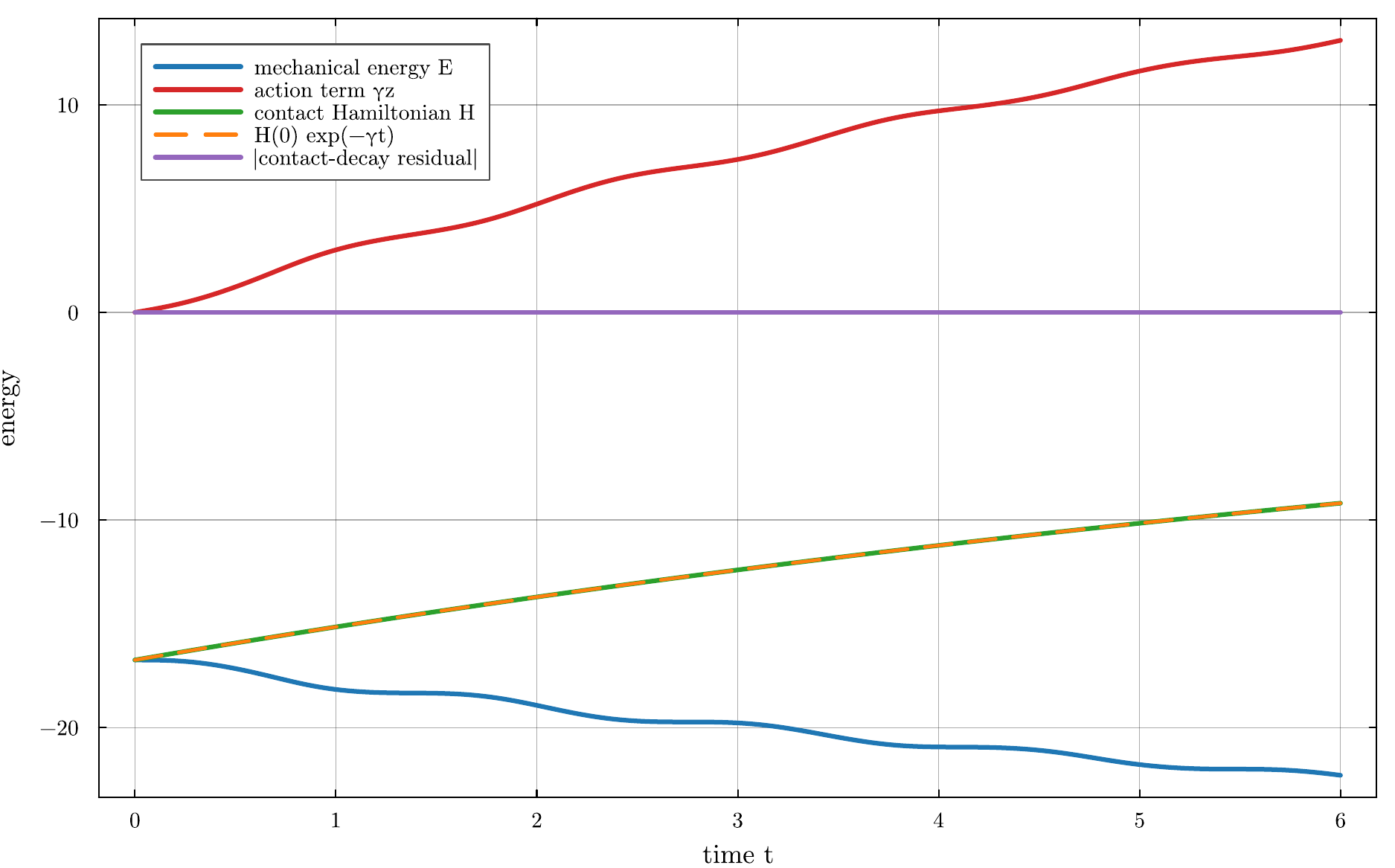}
\caption{Double-pendulum energy bookkeeping at $\gamma=0.1$,
$\tau=0.005$, and $T=6$. The full contact Hamiltonian $\Hc$ follows the expected
decay while $\Emech$ and $\gamma z$ exchange energy.}
\label{fig:dp-energy-bookkeeping}
\end{figure}

\subsection{Core geometric benchmarks: spherical pendulum and torus particle}
\label{subsec:exp-generality}

The spherical pendulum tests a different source of non-separability: the metric
factor $1/\sin^2\theta$ in~\eqref{eq:sphH}, which couples $\theta$ and
$p_\varphi$ and steepens sharply as the trajectory approaches the polar
singularities.  This steepening degrades local accuracy faster than for the
double pendulum, so we start the refinement grid at $\tau=0.005$ to keep every
point in the asymptotic regime; as with the double pendulum, this is a
refinement-window choice and not a stability limit, the measured threshold for
this system being $\tau\approx0.2$ (Section~\ref{subsec:exp-stability}). Over
$\tau\in\{0.005,0.0025,0.00125,0.000625\}$ at $T=1$, the Pihajoki-contact
endpoint error falls from $3.92\times10^{-4}$ to $6.12\times10^{-6}$ with
consecutive observed orders $2.00$, $2.00$, $2.00$, in close agreement with the
slope-two reference (Table~\ref{tab:geometry-convergence}).

With contact damping, $p_\varphi$ is no longer conserved but decays as a
Noether-Herglotz quantity.  This identity is exact by construction of the
discretization rather than a quantitative convergence result: because the
azimuthal angle is cyclic, $\partial_\varphi H=0$, so the kinetic and projection
sub-flows leave $p_\varphi$ untouched and the two damping half-steps rescale it by
$e^{-\gamma\tau}$ exactly. The discrete law $p_\varphi(t)=p_\varphi(0)e^{-\gamma
t}$ therefore holds to roundoff for any scheme of this form, independent of
$\tau$. Consistent with this, over $T=4$ at $\tau=0.0025$ the maximum absolute
deviation is $1.87\times10^{-13}$ for $\gamma=0.1$ and $3.91\times10^{-14}$ for
$\gamma=0.5$ (relative errors $\sim\!10^{-13}$, at the roundoff level), confirming
that no quantity beyond floating-point error enters. We report this as an
exactness property of the splitting for cyclic coordinates, not as a measure of
trajectory accuracy.

To test the Riemannian formulation~\eqref{eq:riemH}, we use a particle on a torus
with major radius $R=3$, minor radius $r=1$, and gravitational potential. The
inverse metric depends on the poloidal angle, and the Gaussian curvature
\[
  K(\theta)=\frac{\cos\theta}{r(R+r\cos\theta)}
\]
changes sign between the outer and inner sides of the torus. The same
Pihajoki-contact formulation applies without any torus-specific modification
beyond supplying $g^{ij}(q)$, $V(q)$, and their derivatives.  Over
$\tau\in\{0.005,0.0025,0.00125,0.000625\}$ at $T=1$, the conservative torus run
has endpoint error decreasing from $1.27\times10^{-4}$ to $1.99\times10^{-6}$
with consecutive observed orders $2.00$, $2.00$, $2.00$
(Table~\ref{tab:geometry-convergence}).  The damped energy-bookkeeping residual for
the torus at $\gamma=0.1$ is $1.08\times10^{-4}$ over $T=6$.

For all three benchmarks Table~\ref{tab:geometry-convergence} separates the full
endpoint error $e_{qpz}$ into its physical component $e_{qp}$ and action component
$e_z$. The physical and action components are each second order (observed order
$2.00$ throughout), so the second-order rate is not an artifact of either part
alone. Their relative size varies with the system: for the double pendulum
$e_{qp}\approx e_z$, for the spherical pendulum $e_z$ is the larger, and for the
torus particle the action component dominates ($e_z$ exceeds $e_{qp}$ by about an
order of magnitude), so there the full-state convergence is driven by the Herglotz
action update rather than by the mechanical variables.

\begin{table}[tbp]
\caption{Pihajoki-contact convergence checks for the non-separable
examples in the conservative case ($\gamma=0$) at $T=1$.  Endpoint errors are
$L^2$ norms of the computed minus reference final state: $e_{qpz}$ over the full
state $(q,p,z)$, $e_{qp}$ over the physical variables $(q,p)$, and
$e_z=|z-z^{\mathrm{ref}}|$ over
the action. ``order'' is the observed order of $e_{qpz}$ between consecutive step
sizes ($e_{qp}$ and $e_z$ are likewise second order; see text). ``$H$ err'' is the
scalar energy-conservation diagnostic $\max_t|\Hc(t)-\Hc(0)|$, a structural
quantity and not a trajectory error. The damped double pendulum ($\gamma=0.1$) is
checked separately in the text.}
\label{tab:geometry-convergence}
\centering
\small
\resizebox{\columnwidth}{!}{
\begin{tabular}{lcccccc}
\toprule
System & $\tau$ & $e_{qpz}$ & $e_{qp}$ & $e_z$ & order & $H$ err \\
\midrule
Double pendulum & 0.005 & $2.74\times10^{-4}$ & $1.93\times10^{-4}$ & $1.95\times10^{-4}$ & -- & $7.28\times10^{-5}$ \\
Double pendulum & 0.0025 & $6.85\times10^{-5}$ & $4.82\times10^{-5}$ & $4.87\times10^{-5}$ & 2.00 & $1.82\times10^{-5}$ \\
Double pendulum & 0.00125 & $1.71\times10^{-5}$ & $1.21\times10^{-5}$ & $1.22\times10^{-5}$ & 2.00 & $4.55\times10^{-6}$ \\
Double pendulum & $6.25\times10^{-4}$ & $4.28\times10^{-6}$ & $3.01\times10^{-6}$ & $3.04\times10^{-6}$ & 2.00 & $1.14\times10^{-6}$ \\
Spherical pendulum & 0.005 & $3.92\times10^{-4}$ & $2.11\times10^{-4}$ & $3.30\times10^{-4}$ & -- & $3.07\times10^{-5}$ \\
Spherical pendulum & 0.0025 & $9.79\times10^{-5}$ & $5.28\times10^{-5}$ & $8.24\times10^{-5}$ & 2.00 & $7.68\times10^{-6}$ \\
Spherical pendulum & 0.00125 & $2.45\times10^{-5}$ & $1.32\times10^{-5}$ & $2.06\times10^{-5}$ & 2.00 & $1.92\times10^{-6}$ \\
Spherical pendulum & $6.25\times10^{-4}$ & $6.12\times10^{-6}$ & $3.30\times10^{-6}$ & $5.15\times10^{-6}$ & 2.00 & $4.80\times10^{-7}$ \\
Torus particle & 0.005 & $1.27\times10^{-4}$ & $1.55\times10^{-5}$ & $1.26\times10^{-4}$ & -- & $1.12\times10^{-4}$ \\
Torus particle & 0.0025 & $3.18\times10^{-5}$ & $3.88\times10^{-6}$ & $3.15\times10^{-5}$ & 2.00 & $2.80\times10^{-5}$ \\
Torus particle & 0.00125 & $7.94\times10^{-6}$ & $9.69\times10^{-7}$ & $7.88\times10^{-6}$ & 2.00 & $7.00\times10^{-6}$ \\
Torus particle & $6.25\times10^{-4}$ & $1.99\times10^{-6}$ & $2.42\times10^{-7}$ & $1.97\times10^{-6}$ & 2.00 & $1.75\times10^{-6}$ \\
\bottomrule

\end{tabular}}
\end{table}

\subsection{Contact and energy-decay diagnostics}
\label{subsec:exp-variable-friction}

The benchmarks above use a constant friction coefficient. The contact-Herglotz
formulation, however, accommodates a position-dependent friction
$\gamma(q)$~\eqref{eq:riemH}, for which the contact Hamiltonian obeys the
path-dependent decay law $\Hc(t)=\Hc(0)\exp(-\!\int_0^t\gamma(q(s))\,ds)$
(eq.~\eqref{eq:riem-energy}). The momentum equation then carries the extra
deflection term $-z\,\partial_q\gamma$, which couples the friction gradient to the
action variable and vanishes identically for constant $\gamma$. We integrate this
case with the exactly integrable variable-friction sub-flow (derived in the
Supplementary Material): holding $q$ fixed over a damping half-step of duration
$h$, the contact generator $\gamma(q)z$ gives in closed form
\begin{equation}
  z\mapsto z\,e^{-\gamma(q)h},\qquad
  p\mapsto e^{-\gamma(q)h}\bigl(p-h\,z\,\partial_q\gamma(q)\bigr),
  \label{eq:vargamma_step}
\end{equation}
composed symmetrically about the conservative projection step.

The $z$-bookkeeping of the variable-friction step differs from the
constant-friction one of Section~\ref{subsec:dissipation}, and we state both
explicitly because they are not interchangeable. For constant $\gamma$ the
damping half-steps leave $z$ untouched and the whole homogeneous factor
$e^{-\gamma\tau}$ is carried by the exponential update~\eqref{eq:z_update}. For
$\gamma(q)$ the roles are reversed: the two half-steps~\eqref{eq:vargamma_step}
each damp $z$ by $e^{-\gamma(q)\tau/2}$, supplying the full factor between them,
and the conservative sub-step contributes the plain midpoint quadrature
\begin{equation}
  z\mapsto z+\tau L_{\mathrm{mid}},
  \label{eq:z_update_vargamma}
\end{equation}
with no homogeneous factor of its own. Applying~\eqref{eq:z_update} inside the
variable-friction composition would damp $z$ twice, by $e^{-2\gamma\tau}$; the
two conventions must not be mixed. The half-step~\eqref{eq:vargamma_step} is
the exact flow of the isolated contact generator $\gamma(q)z$ and is
distinct from the potential-dissipative sub-flow
\eqref{eq:Bflow-variable-z}--\eqref{eq:Bflow-variable-p} of the separable
comparison method: the latter folds the frozen potential $V(q)$ into the same
closed-form solve, whereas here $V$ is carried entirely by the conservative
backbone and only $\gamma(q)z$ is integrated in the damping step. The two are
separate routines in the code (the variable-friction damping half-step versus the
separable $V+\gamma z$ sub-flow) and should not be identified.

We test this on the torus particle with the smooth friction profile
$\gamma(\theta)=\gamma_0(1+a\cos\theta)$, $\gamma_0=0.1$, $a=0.5$, whose gradient
is nonzero in the poloidal angle, so the deflection term is active throughout the
run. Over $T=20$ at $\tau=0.005$ the computed contact Hamiltonian $\Hc(t)$ tracks
the decay law to a maximum absolute residual of $1.32\times10^{-4}$, a relative
residual of $2.6\times10^{-5}$. Against a
refined reference that integrates the full variable-friction contact field,
including the $-z\,\partial_q\gamma$ term, the endpoint error falls from
$4.31\times10^{-4}$ at $\tau=0.01$ to $6.74\times10^{-6}$ at $\tau=0.00125$ with
consecutive observed orders $2.00$. The variable-friction scheme thus keeps the
second-order accuracy of the constant-friction case while reproducing the
path-dependent decay law.

A full per-system contact validation at $\gamma=0.1$ (trajectory overlays,
action, decay law, residuals; endpoint deviations
$5.7\times10^{-4}$--$1.6\times10^{-5}$ across the three systems) is collected
in the Supplementary Material, \S\,S11.

\subsection{Projection activity and diagonal constraint behavior}
\label{subsec:exp-master}
\label{subsec:exp-active}

Under the adaptive tolerance $\max(10^{-10},\tau^{2})$ the symmetric extended step
already meets the constraint on the refined benchmarks, so $\mu=0$ and the step is
effectively explicit (Section~\ref{subsec:projection}). This is specific to the
step size, not the tolerance: the intrinsic $\mathcal{O}(\tau^3)$ gap rises above
the $\tau^2$ floor once the step is coarse enough and the projection then activates
unchanged: on the fast-rotating torus of Section~\ref{subsec:exp-load} the
activation fraction under the production tolerance rises from $0.53$ at
$\tau=0.01$ to $1.00$ for $\tau\ge0.03$, with no failed solves
(Table~\ref{tab:proj-load}). To exhibit the active solve across the full grid
we repeat the conservative runs with the floor disabled and
$\mathrm{tol}=10^{-10}$ (Supplementary Material, Table~S8): on all three benchmark
systems and across the whole step grid the solve converges in exactly two Newton
iterations to a nonzero $\|\mu\|=\mathcal{O}(\tau^3)$, with no failures, driving
the projection residual to the finite-difference floor in one correction; the
Jacobian $\partial G/\partial\mu=4I+\mathcal{O}(\tau)$ is well conditioned by
construction. This is the regime in which the projection is exercised,
and the measured orders there are $2.0000$, $2.0000$, and $1.9999$ for the double
pendulum, spherical pendulum and torus particle respectively: activation changes
neither the endpoint error nor the order, and the exactly-projected hypothesis of
Proposition~\ref{prop:conformal-symplectic} is met rather than bypassed. Its only cost is
the $2n$-evaluation finite-difference Jacobian per iteration (about $49$ gradient
evaluations at $\tau=0.005$ for the double pendulum, versus $9$ inactive and
$29$--$30$ for midpoint, the two midpoint figures being the $\tau=0.005$
work-precision model and the $\tau=0.01$ instrumented count of Table~S13 in
the Supplementary Material), with the structural comparison reported below.

A compact benchmark table in the Supplementary Material (Table~S4)
summarizes representative rows from the quantitative benchmark, read as both an
accuracy table and an applicability map; the reading guide (including why the
separable splitting has no row and why the RK4 error entries are
self-referential) accompanies the table there. The one result from that
comparison used in the sequel is this: across all sampled steps, systems, and
$\gamma\in\{0,0.1,0.5\}$ the projection terminates at its first residual check
($\mu=0$, no failures). The uncorrected gap $\|G(0)\|$ (e.g.\ $1.4\times10^{-5}$,
$2.4\times10^{-5}$, $4.6\times10^{-6}$ for the three systems at $\tau=0.005$,
$\gamma=0$) is already $\mathcal{O}(\tau^3)$, below the $\tau^2$ floor, so the
floor is never binding here.

\subsection{Step-size stability of the explicit sub-flow}
\label{subsec:exp-stability}

The duplicated backbone is explicit, so it carries a step-size restriction in
practice. In the linear model problem the restriction is entirely repaired by
exact projection: for the constant-friction harmonic oscillator of frequency
$\omega$ the duplicated Strang step decouples into two St\"ormer--Verlet maps,
each stable for
$\omega\tau<2$; the unprojected symmetric average has determinant
$1+(\omega\tau)^6/64$; and the projected step conserves the
oscillator energy and is stable at every step size, contracting under
damping by the conformal factor $e^{-\gamma\tau}$ of
Proposition~\ref{prop:conformal-symplectic} (Supplementary Material,
\S\,S15). On the nonlinear benchmarks a finite limit reappears, and the
refinement grids above are reported well inside it. We quantify that limit
rather than assert it. At $\gamma=0$ the bare
energy $H_{\mathrm{bare}}$ is an exact invariant of the continuous system, so a
step can be called stable when the trajectory stays finite over a long horizon,
the projection solve never fails, and the relative bare-energy error stays
bounded; we scan a ladder of steps to $T=50$ and record the coarsest step meeting
this criterion (driver \texttt{analysis/run\_stability\_threshold.jl}).

The measured thresholds are $\tau\approx0.2$ for the double pendulum, the
spherical pendulum and the fast-rotating torus, and $\tau\approx0.4$ for the
torus particle at nominal momentum, with the first unstable step of the ladder at
$0.3$, $0.3$, $0.3$ and $0.5$ respectively. Two consequences follow.
First, the thresholds vary by only a factor of two across the benchmark
geometries, so the stiffness of the metric moves the limit but does not dominate
it in this suite. Second, every step used anywhere in this paper---including the
coarsest stress-test step $\tau=0.08$ of
Section~\ref{subsec:exp-load}---lies below the threshold by at least a factor of
two, and the fine convergence grids by a factor of forty or more. The exclusion
of coarser steps from the convergence grids of
Section~\ref{subsec:exp-doublependulum} is therefore a matter of staying inside
the asymptotic regime, not of instability: on the double pendulum at $T=1$ the
projected step remains finite, solves without failure, and retains its
second-order rate out to $\tau=0.2$. The linear analysis of \S\,S15 shows that
no linear mechanism sets these thresholds, although in the units
$\tau^{*}\omega_{\max}$, with $\omega_{\max}$ the stiffest frozen-coefficient
frequency along the trajectory, the measured brackets straddle the backbone
ceiling $\omega\tau=2$ for three of the four systems. The values reported here
are empirical and specific to these systems and initial data.

\subsection{Stress test: projection controls long-time drift}
\label{subsec:exp-load}

The fast-rotating torus is used here as a mechanism testbed rather than as a
use case: its exact invariant and controllable stiffness let us force the
projection active and isolate its structural effect. On this geometry a
frozen-coordinate exact splitting exists and is the method of choice at
matched cost (Section~\ref{subsec:exp-kevrekidis}); the in-niche comparison
for the projected method is the dense-metric case of
Fig.~\ref{fig:kevrekidis}(b).

The fine-step benchmarks do not by themselves show a structural advantage over the
$\mu=0$ average, since the inter-copy gap already lies below tolerance. The stress
test is a particle in rapid azimuthal circulation on the torus of
Section~\ref{subsec:exp-generality} with $p_\varphi=40$ (bare energy
$H_0\approx57.9$): $p_\varphi$ is conserved and $H_{\mathrm{bare}}$ is an exact
invariant, so its discrete drift is an unambiguous structure diagnostic, and at the
coarse steps used here the gap rises above the $\tau^2$ floor so the projection is
active at every step under the production tolerance (no floor removal).
Integrating the conservative system to $T=10^{3}$ (Table~\ref{tab:proj-load},
Fig.~\ref{fig:proj-load}), at $\tau=0.05$ the unprojected average drifts to relative
energy error $4.3\times10^{-2}$ (growing secularly), while the projected step holds
$3.7\times10^{-4}$ with no secular trend -- two orders of magnitude better, below RK4's
$2.6\times10^{-2}$, and comparable to implicit midpoint ($6.9\times10^{-4}$). The
gap widens as the step coarsens: at $\tau=0.08$ the average has lost the solution
(relative error $66$) while the projected step is still bounded
($1.0\times10^{-3}$), at a mean of two Newton iterations per step. As the step
refines the activation fraction falls ($1.00$ at $\tau\ge0.03$ to $0.53$ at
$\tau=0.01$) and the configurations converge. The flat projected envelope is the
signature of the symplectic conservative backbone (Theorem~5
of~\cite{Jayawardana2023}): the symmetric projection makes the duplicated step
symplectic on the diagonal, so its energy error is bounded whereas the $\mu=0$
average drifts. This advantage is specific to the coarse-step stiff regime; at
still coarser steps ($\tau\gtrsim0.3$) the explicit sub-flow is itself unstable
(Section~\ref{subsec:exp-stability}).

\begin{table}[tbp]
\caption{Projection-controlled regime on the fast-rotating torus
($p_\varphi=40$, $\gamma=0$, $T=10^{3}$): long-time relative energy drift,
projection iterations, and activation fraction.}
\label{tab:proj-load}
\centering
\small
\resizebox{\columnwidth}{!}{
\begin{tabular}{lccccc}
\toprule
$\tau$ & no proj.\ ($\mu=0$) & projected & RK4 & Newton/step & activation \\
\midrule
0.08 & $6.62\times10^{1}$ & $1.05\times10^{-3}$ & $1.31\times10^{-1}$ & 2.00 & 1.00 \\
0.05 & $4.35\times10^{-2}$ & $3.74\times10^{-4}$ & $2.59\times10^{-2}$ & 2.00 & 1.00 \\
0.03 & $2.91\times10^{-3}$ & $1.30\times10^{-4}$ & $2.25\times10^{-3}$ & 2.00 & 1.00 \\
0.02 & $3.80\times10^{-4}$ & $5.69\times10^{-5}$ & $2.99\times10^{-4}$ & 1.91 & 0.91 \\
0.01 & $1.41\times10^{-5}$ & $1.41\times10^{-5}$ & $9.36\times10^{-6}$ & 1.53 & 0.53 \\
\bottomrule

\end{tabular}}
\end{table}

\begin{figure}[tbp]
\centering
\includegraphics[width=\linewidth]{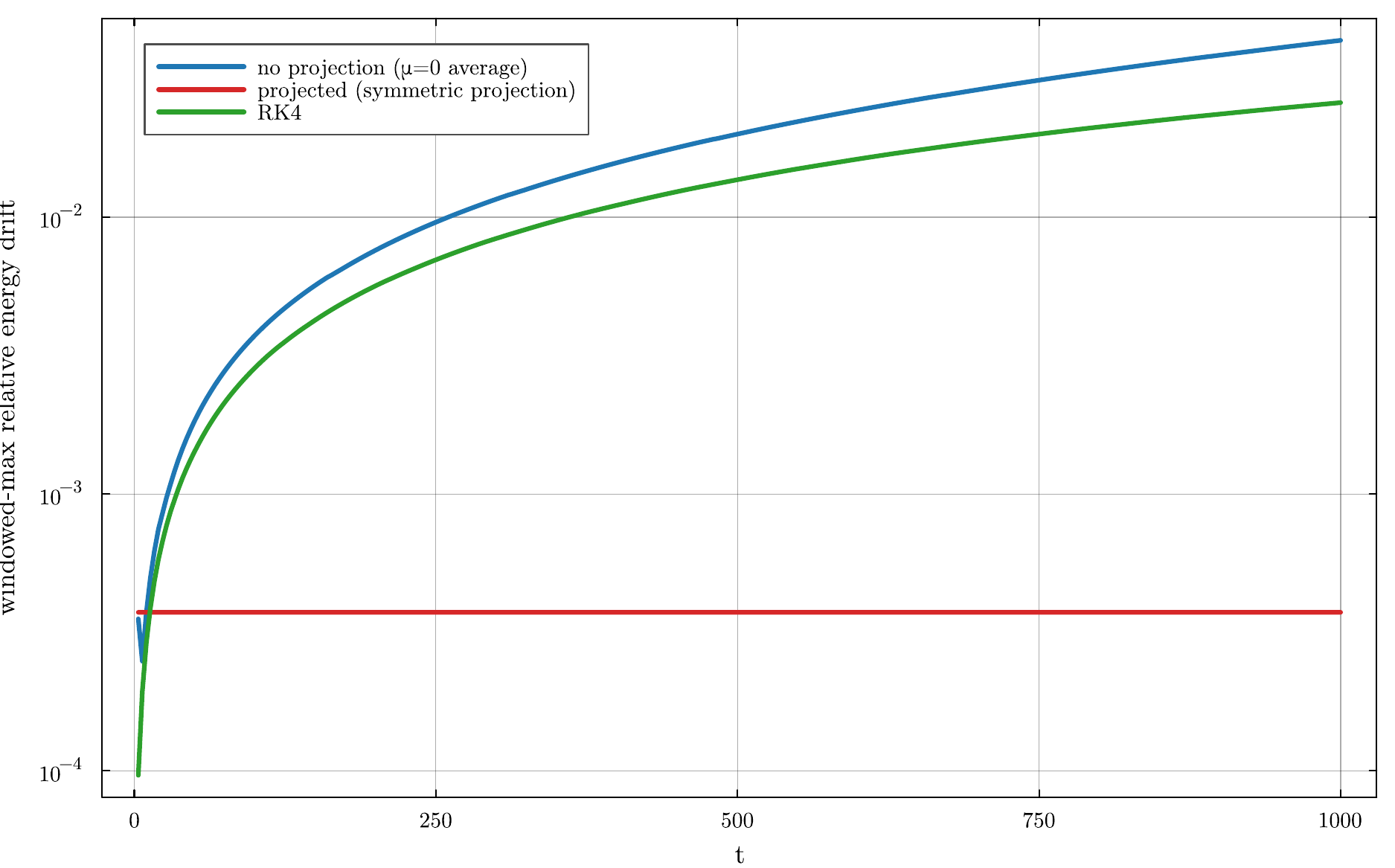}
\caption{Long-time relative energy drift on the fast-rotating torus
($p_\varphi=40$, $\gamma=0$, $\tau=0.05$), shown as the windowed-maximum envelope
of $|H_{\mathrm{bare}}(t)-H_0|/|H_0|$ for the unprojected $\mu=0$ average, the
projected step, and RK4.}
\label{fig:proj-load}
\end{figure}

\subsection{Additional stress tests and ablations}
\label{subsec:exp-activation}

The fine-step benchmarks above show that the method can be effectively explicit: the
symmetric extended step already meets the constraint, so the Newton correction is often
inactive or very small. The experiments here move to active-projection regimes. Two
are conservative ($\gamma=0$): a cadence ablation weakens the projection in a stiff
torus regime, and an ellipsoid test checks activation in a second non-separable
geometry; these probe the projected conservative backbone and the enforcement of the
physical diagonal. A third is dissipative ($\gamma=0.1$), exercising active projection
and damping together so that the contact-decay law is tested alongside the diagonal
return.

A projection-cadence ablation on the same fast torus ($\tau=0.05$, $T=10^3$)
projects only every $k$ steps. Cadences $k=1,2,5$ hold the drift of the per-step
method, while $k=10$ is worse than never projecting at all: by ten uncorrected
steps the multiplier norms reach $0.74$, far outside the local regime
$\mu^*=\mathcal{O}(\tau^3)$ of the analysis, so each late projection is a
large, unmodeled kick. Intermittent projection is not a cheap substitute for
per-step projection. The full ablation, including the distinction between the
single-copy read-off and the $\mu=0$ average, is in the Supplementary Material
(\S\,S14).

\paragraph{Active projection on a triaxial ellipsoid.}
A second activation check on a triaxial ellipsoid ($a=1.0$, $b=1.3$, $c=0.7$),
whose anisotropic induced metric has a nonzero coupling $g_{\theta\varphi}$ and
no cyclic symmetry, confirms the picture in an independent geometry: the
projection activates at every step and enforces the diagonal to tolerance, the
single-copy read-off diverges, and passive averaging matches the projected
accuracy while leaving a one-step gap---the effect is structural, not an
accuracy gain. Details and per-variant data are in the Supplementary Material
(\S\,S14).

\paragraph{Damped active-projection stress test.}
The two checks above are conservative and isolate the projected backbone; to
exercise the projection and the damping together we refine the stiff
fast-rotating torus of Section~\ref{subsec:exp-load} (the conformal-scaling
benchmark, $p_\varphi=40$; as there, a mechanism testbed rather than a use
case) with friction $\gamma=0.1$ and the adaptive $\tau^2$
floor disabled, so the multiplier solve is active at every step (activation
fraction $1.00$, mean two to three Newton iterations, no failed solves). Over
$\tau\in\{0.04,0.02,0.01,0.005\}$ at $T=1$ the full endpoint error $e_{qpz}$
and both of its components $e_{qp}$ and $e_z$ are second order (observed
orders $2.01$, $2.00$, $2.00$), with the action component dominating
($e_z\approx4$--$5\,e_{qp}$), so the $(q,p,z)$ rate is set by the Herglotz
action update; the contact Hamiltonian tracks the decay law
$\Hc(t)=\Hc(0)e^{-\gamma t}$ with a residual that is itself second order in
$\tau$ (full table: Supplementary Material, Table~S11). This is
the dissipative counterpart of the conservative stress test of
Section~\ref{subsec:exp-load}: the second-order trajectory and contact-decay
convergence persist with damping switched on and the projection load-bearing.
Like the conservative tests, it probes the discrete dynamics; it includes no
Tao baseline and is not used to compare against Tao.

\paragraph{Projection-activity summary.}
The projection activity across regimes is collected in a summary table in the
Supplementary Material (Table~S10). In the
fine-step benchmarks the symmetric average already satisfies the constraint to
$\mathcal{O}(\tau^{3})$ below the adaptive tolerance, so the projection is inactive
($\mu=0$, no Newton correction); in the geometrically demanding or coarse-step regimes,
the fast-rotating torus, its cadence ablation, the triaxial ellipsoid, and the
damped fast-rotating torus above, it is
active at every step and structurally relevant. Projection is thus often inactive and
effectively explicit, but becomes a functional, structure-controlling part of the
method exactly where the duplicated dynamics would otherwise drift off the diagonal.

\subsection{Long-time structure diagnostics}
\label{subsec:exp-longtime}

The benchmarks above are reported at the short horizon $T=1$ at which a
fourth-order non-geometric method such as RK4 is most favorable: there its
endpoint trajectory error is the smallest entry in the compact benchmark table
(Supplementary Material, Table~S4), and its one-step contact-form residual is small
as well. Short-time accuracy, however, does not control whether the discrete flow
respects the contact-conformal law over many steps, and it is the long-horizon
behavior that separates a structure-preserving integrator from an accurate but
non-geometric one~\cite{Hairer2006}. This is the long-time counterpart of the
one-step residual of Section~\ref{subsec:contact-residual}: the residual measures
the local conformal defect, whereas the experiment below measures its
accumulation over a long run.

On the one benchmark where structure and accuracy cleanly separate -- the
separable conservative oscillator, for which the contact splitting is a composition
of exact contact maps -- the accumulated contact-form residual
$\rhoeta(t)=\|(D\Phi^{t})^{\top}\eta(\Phi^{t}u_0)-\eta(u_0)\|_2$ stays at the
$\sim\!10^{-13}$ finite-difference floor to $T=2\times10^{3}$ ($\tau=0.05$),
whereas RK4, lacking a discrete conformal law, drifts secularly (from
$\sim\!3\times10^{-8}$ to $\sim\!4.5\times10^{-6}$). This isolates the structural
point -- the splitting carries no contact-form defect at any $\tau$, while a
non-geometric method's defect grows with $t$ -- but it tests the separable
splitting, not the projected method, whose evidence is the one-step estimate and
the run below; on the non-separable benchmarks $\rhoeta(t)$ of the composed map is
instead dominated by trajectory-error amplification, where RK4's higher accuracy
makes its residual smallest (Supplementary Material, Table~S12(b)).

\paragraph{Long-time behavior of the projected method.}
To probe the projected Pihajoki-contact method itself over a long horizon, the
setting in which the duplicated copies could in principle drift apart, we
integrate the non-separable torus particle ($\gamma=0$) to $T=10^{3}$ at
$\tau=0.01$ ($10^{5}$ steps) and track three quantities along the run. The
inter-copy gap $\|(\hat q-\hat x,\,\hat p-\hat y)\|$ measures the separation of the
two copies after the explicit extended step. Because the symmetric projection
re-lifts to the diagonal $q=x$ at the start of every step, this gap is reset each
step and stays at the empirically observed $\mathcal{O}(\tau^3)$ per-step level
for this run
(maximum $3.7\times10^{-5}$, mean $1.6\times10^{-5}$): the copies do not drift
apart. The conservative energy drift $|H_{\mathrm{bare}}(t)-H_{\mathrm{bare}}(0)|$
stays bounded over the full horizon (maximum $4.5\times10^{-4}$, relative
$9.0\times10^{-5}$), consistent with the intended geometric constraint rather
than a secular drift. And the one-step
contact-form residual, sampled along the trajectory, also remains at the
empirical $\mathcal{O}(\tau^3)$ level (maximum $2.2\times10^{-5}$) with no growth in $t$.
This is a direct long-time test of the projected
method, complementing the separable-splitting diagnostic above.

\subsection{Baselines, scaling, and robustness}
\label{subsec:exp-referee}

This subsection carries the explicit Tao baseline into the long-time and
structural tests, identifies a regime in which the projected method, which carries
no Tao binding parameter, compares favorably against the standard rotating Tao
baseline, puts the cost
comparison on a measured footing (including
an analytic-quality-Jacobian data point), and checks second-order behavior and
projection activity at higher dimension, under randomized initial data, and
under finite-difference and tolerance variation, closing with two damped
structural diagnostics. All data are regenerated by the reproducibility driver.

\paragraph{Long-time energy drift with the Tao baseline.}
Section~\ref{subsec:exp-longtime} reported the projected method's own long-time
behavior; here we add the Tao baseline and the unprojected average to the same
stress test, the conservative fast-rotating torus ($p_\varphi=40$, $\gamma=0$),
integrated to $T=10^{3}$, where the bare energy is an exact invariant so the
comparison needs no external reference. Table~\ref{tab:ref-longtime} reports the
maximum relative drift $\max_t|H(t)-H_0|/|H_0|$. The projected method's drift is
bounded and smaller than Tao's (at $\omega=10$) at every step, by roughly an
order of magnitude, and at the coarsest step $\tau=0.08$ Tao diverges
($2.1\times10^{3}$) where the projected method remains controlled
($1.0\times10^{-3}$).

The table does not support a stronger reading than that, and we set out its
limits before drawing on it. Against implicit midpoint the margin is a factor of
$1.6$--$2.0$ at every step---never an order of magnitude---and the advantage is
over Tao and over the unprojected average, that is, over the two constructions
the method is assembled from. Against RK4 the ordering reverses under refinement:
RK4's drift falls below the projected method's at the finest step
($9.4\times10^{-6}$ against $1.4\times10^{-5}$ at $\tau=0.01$), where the
projection has in any case deactivated and the projected and unprojected columns
coincide to three significant figures. The structural claim we make from this
table is therefore confined to the coarse-step regime, where the projection is
active at every step and the drift is bounded with no secular trend; it is not a
uniform accuracy claim, and at fine steps a fourth-order non-geometric method is
the better choice on this diagnostic. The mechanism is the one motivating the construction: the
inter-copy gap $\|(\hat q-\hat x,\,\hat p-\hat y)\|$ opened by the explicit
duplicated step, and what each method does with it afterwards.

The comparison must be stated carefully, because the two methods differ in kind
rather than in magnitude. For the projected step the gap that the explicit
extended map opens before projection is, on the solved constraint,
$\|(\hat q-\hat x,\hat p-\hat y)\|=2\|\mu^*\|$; on this benchmark we measure it
as $1.0\times10^{-1}$, $2.5\times10^{-2}$, $5.3\times10^{-3}$ at
$\tau=0.08,0.05,0.03$, scaling with observed orders $3.05$ and $3.02$, i.e.\ at
the $\mathcal{O}(\tau^3)$ rate of Section~\ref{subsec:exp-master}. This is
not smaller than Tao's gap by any dramatic factor---Tao's is
$6.3\times10^{2}$, $1.1\times10^{-1}$, $3.8\times10^{-2}$ at the same steps, so
away from the divergent coarsest step the two are within a factor of $4$--$7$ of
one another. The difference is that the symmetric average returns the projected
copies to the diagonal exactly, so the gap entering the next step is zero
by construction, whereas Tao's binding rotation leaves its gap in the state and
lets it accumulate; at $\tau=0.08$ it reaches $\mathcal{O}(10^{2})$ and drives
the divergence. It is the exact per-step return, not a smaller instantaneous
gap, that the construction buys.

We flag one artifact of the archive to forestall a misreading: the
\texttt{gap\_projected} column of the accompanying result file records the
projection solver's terminal residual $\|G(\mu^*)\|$, not the inter-copy
gap $2\|\mu^*\|$ quoted above; the two coincide only when $\mu^*=0$. At the two
finest steps of that file the residual also sits on the $\tau^{2}$ tolerance
floor rather than on any intrinsic scale.

\begin{table}[tbp]
\caption{Long-time Tao comparison on the conservative fast-rotating torus
($p_\varphi=40$, $\gamma=0$): maximum relative energy drift
$\max_t|H(t)-H_0|/|H_0|$ at $T=10^{3}$ (``unproj.''\ is the unprojected
symmetric average,
$\mu=0$; Tao at $\omega=10$).}
\label{tab:ref-longtime}
\centering
\small
\resizebox{\columnwidth}{!}{
\begin{tabular}{cccccc}
\toprule
$\tau$ & projected & unproj.\ ($\mu=0$) & RK4 & midpoint & Tao ($\omega=10$) \\
\midrule
$0.08$ & $1.05\times10^{-3}$ & $6.6\times10^{1}$   & $1.31\times10^{-1}$ & $1.71\times10^{-3}$ & $2.14\times10^{3}$ \\
$0.05$ & $3.74\times10^{-4}$ & $4.35\times10^{-2}$ & $2.59\times10^{-2}$ & $6.90\times10^{-4}$ & $3.81\times10^{-3}$ \\
$0.03$ & $1.30\times10^{-4}$ & $2.91\times10^{-3}$ & $2.25\times10^{-3}$ & $2.52\times10^{-4}$ & $1.38\times10^{-3}$ \\
$0.02$ & $5.69\times10^{-5}$ & $3.80\times10^{-4}$ & $2.99\times10^{-4}$ & $1.12\times10^{-4}$ & $5.48\times10^{-4}$ \\
$0.01$ & $1.41\times10^{-5}$ & $1.41\times10^{-5}$ & $9.36\times10^{-6}$ & $2.82\times10^{-5}$ & $1.51\times10^{-4}$ \\
\bottomrule

\end{tabular}}
\end{table}

\paragraph{A regime sensitive to Tao's binding parameter.}
The projected method does not use Tao's binding rotation parameter $\omega$; it
uses instead a projection tolerance, and an active projection may require Newton
iterations, so the trade is one kind of numerical control for another rather
than a parameter eliminated for free. On the same integrable fast torus
($T=200$) we sweep $\omega$ over three decades and compare against Tao's best
drift over all $\omega$, an oracle a user does not have a priori
(Supplementary Material, Table~S14): even against this best case the projected
method is $8$--$10$ times more accurate in long-time energy at every step in
this regime, and the safe $\omega$ window is non-monotonic and step-dependent
($3$ of the $10$ sampled $\omega$ diverge at the two coarser steps), so there
is no uniformly safe choice of $\omega$ in the sampled set. This comparison
supports the projected method in the near-integrable fast-rotation regime,
where the binding rotation is most stressed; it is not a general claim that
projection is more reliable than Tao's method.

\paragraph{Measured cost on an equal footing.}
To make the per-step cost unambiguous we measure gradient (vector) evaluations
by instrumenting every system evaluation, rather than modeling them
(Supplementary Material, Table~S13; $\tau=0.01$, $\gamma=0$). Two facts hold
simultaneously, both consistent with the discussion above: in the inactive
fine-step regime of all headline benchmarks the projected step is markedly
cheaper in gradient work than implicit midpoint ($9$ versus $30$ on the
torus); in the active regime it is more expensive ($49$ versus $30$), since
the finite-difference Jacobian solve dominates exactly where the projection is
load-bearing (Section~\ref{subsec:exp-activation}). Both halves are intrinsic
to the method and are stated together. For the structural comparison we also
build the step Jacobians at analytic quality by complex-step differentiation
on the coupled-rotor chain at a representative active state (Table~S13(b)):
the projection solves a strictly smaller system ($2n$ versus $2n+1$) and
converges in a single Newton iteration against midpoint's two. The two
comparisons pull in opposite directions, and the reconciliation is the
Jacobian construction: the $49$-versus-$30$ deficit is the price of assembling
the projection Jacobian by finite differences ($2n$ extra gradient evaluations
per iteration), while the analytic-Jacobian data show that once the Jacobian
is supplied the solve itself is the smaller one. The finite-difference deficit
is an implementation cost, removable by the analytic tangent maps already
flagged in Section~\ref{sec:discussion}; the structural comparison is the one
that survives it.

\paragraph{Higher-dimensional convergence.}
A coupled-rotor chain with tridiagonal configuration-dependent mass matrix
checks that second-order accuracy and the inactive-projection behavior persist
at $n=4$ and $n=6$ ($2n=12$ projection multipliers); all methods retain their
orders and the projection stays inactive in the fine-step regime, exactly as in
the low-dimensional benchmarks (Supplementary Material, \S\,S13). This is a
robustness check at higher dimension, not a scalability claim.

\paragraph{Damped non-separable diagnostics.}
Two damped diagnostics complete the comparison; the full data are in the
Supplementary Material, Table~S6. On the damped torus ($\gamma=0.2$) RK4 has
the smallest accumulated contact-form residual $\rhoeta(t)$, so the projected
method does not win on $\rhoeta$ on the non-separable damped target and we make
no contact-residual claim against RK4; this is consistent with the reading of
Section~\ref{subsec:contact-residual}, where $\rhoeta$ is a consistency
diagnostic dominated by trajectory-error amplification, not a structural
ranking. On the fast torus, where the cyclic coordinate makes
$p_\varphi(t)=p_\varphi(0)e^{-\gamma t}$ exact, the projected method reproduces
this conformal momentum decay to roundoff ($2.0\times10^{-13}$) whereas Tao
corrupts it to $5.1\times10^{-5}$ through its binding rotation, which mixes
$p_\varphi$ with the copy variable and never returns them to the diagonal. This
isolates the structural defect of the Tao read-off, but it is not evidence for
the contact methods over an accurate non-geometric integrator: RK4 reaches
$4.6\times10^{-14}$, marginally below the projected method, so RK4 wins the
momentum decay taken by itself. The contact-structural advantage of the
projected method appears in the long-time conformal energy behavior
(Table~\ref{tab:ref-longtime}), not in a smaller one-step or accumulated
$\rhoeta$ and not in the momentum decay.

\paragraph{Randomized robustness.}
\label{subsec:exp-ensemble}
To check that the second-order accuracy and inactive-projection behavior
survive away from the nominal state, we draw $50$ initial conditions per
benchmark in a $\pm 25$ percent bare-energy shell (fixed seed) and estimate the
observed order over $\tau\in\{0.005,0.0025,0.00125\}$. Every accepted state of
every system gives mean observed order $2.00$ (zero spread to two decimals)
with median energy drift at the $10^{-7}$--$10^{-5}$ level; the
projection-activation fraction at the coarsest sampled step is state-dependent
(per-run maxima up to $0.29$ on the spherical pendulum), so its per-system mean
should not be read as a bound (full statistics in the Supplementary Material,
\S\,S11).

\paragraph{Finite-difference and tolerance ablation.}
The active projection is insensitive to its two numerical parameters: sweeping
the finite-difference step over six orders of magnitude leaves the converged
multiplier unchanged to ten significant figures, and tightening the tolerance
below $10^{-8}$ saturates the residual at the roundoff floor, so the tolerance
bounds the converged residual rather than being tracked by it. The full sweeps
are in the Supplementary Material (\S\,S14). The active-regime results above
are therefore not artifacts of either setting.

\subsection{Head-to-head with exact-contactomorphism splittings}
\label{subsec:exp-kevrekidis}

This section compares the projected method directly against splittings that preserve the contact structure exactly. We implement two constructions in
the strict/prolonged generator class of~\cite{Kevrekidis2026}; the driver is
\texttt{analysis/run\_kevrekidis\_comparison.jl} and the module is
\texttt{Simulations/KevrekidisSplitting.jl} in the reproducibility package. The depth-one realization below is our own implementation, assembled from the published text of~\cite{Kevrekidis2026}, whose demonstrations are one-degree-of-freedom; its measured performance reflects what can currently be built from that text, not the best the framework might achieve with the higher-order gadget coefficients that its Proposition~4.3 leaves unspecified.

\paragraph{Frozen-coordinate metrics: the splitting exists and wins at matched
cost.} For the torus particle and the spherical pendulum the three-generator
splitting of Section~\ref{sec:nonsep} is available in closed form. We verify
that it converges at second order on both systems, conservative and damped
(observed orders $2.000$--$2.001$ across
$\tau\in[2.5\times10^{-3},2\times10^{-2}]$), and that its one-step pullback
satisfies $\|\Phi^*\eta-e^{-\gamma\tau}\eta\|\lesssim10^{-10}$ (the
finite-difference floor) at every tested $\tau$ and $\gamma$: the
conformal-contact identity holds exactly, where the projected method attains it
only to $\mathcal{O}(\tau^3)$. On the fast-torus stress test of
Section~\ref{subsec:exp-load} the comparison is two-sided. At equal step the
projected method's drift constant is an order of magnitude smaller (a factor of
$12$ at $\tau=0.05$: $4.6\times10^{-3}$ for the splitting against
$3.7\times10^{-4}$ projected; a factor of $11$ at $\tau=0.08$). At matched cost the ordering reverses: the splitting
needs no Newton solve, no duplication and no Jacobian, so running it at
$\tau=2.5\times10^{-3}$ still costs less than the projected step at
$\tau=0.05$ and achieves drift $1.1\times10^{-5}$, thirty-three times below the
projected method's $3.7\times10^{-4}$, with $\eta$ preserved exactly
throughout. Where a frozen-coordinate splitting exists, it, and not the
projected method, is the right choice. Both panels of
Fig.~\ref{fig:kevrekidis} summarize the comparison.

\begin{figure}[tbp]
\centering
\includegraphics[width=\linewidth]{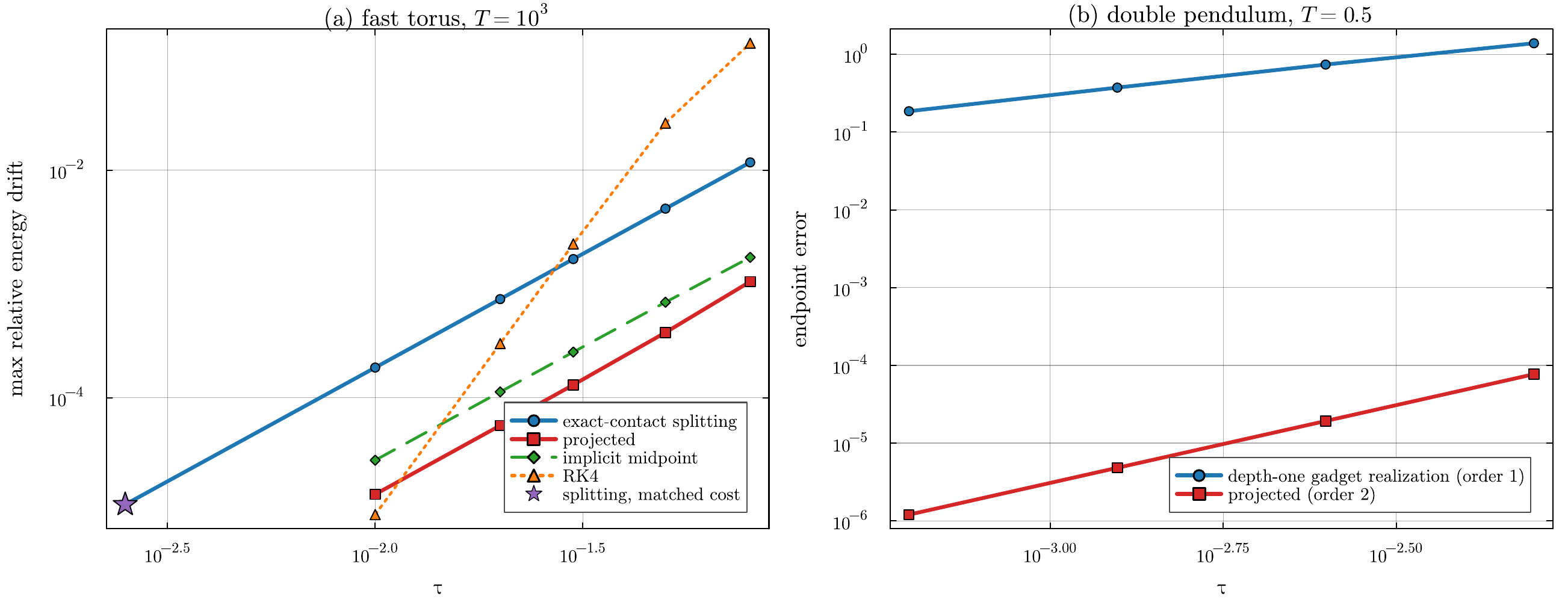}
\caption{Head-to-head with exact-contactomorphism splittings.
\textbf{(a)}~Fast-rotating torus ($p_\varphi=40$, $\gamma=0$, $T=10^3$): maximum
relative energy drift against $\tau$ for the three-generator exact-contact
splitting of Section~\ref{sec:nonsep},
the projected method, implicit midpoint, and RK4. At equal $\tau$ the projected
method's constant is an order of magnitude smaller ($11$--$12\times$ across the
shared steps); the starred point is the
splitting at its matched-cost step $\tau=2.5\times10^{-3}$ (no Newton solve, no
duplication), which undercuts the projected method's entire curve while
preserving $\eta$ exactly. \textbf{(b)}~Double pendulum ($\gamma=0$, $T=0.5$,
common refined-RK4 reference): endpoint error for our depth-one gadget
realization of the commutator construction (measured order $\approx1$, the
expected rate for this realization) against
the projected method (order $2$). The two maps trade opposite virtues: the
depth-one realization preserves the conformal pullback of $\eta$ exactly (to the
finite-difference floor) but is inaccurate, while the projected method is
accurate but carries its $\mathcal{O}(\tau^3)$ contact residual. On this dense
metric the accuracy gap---four to five orders of magnitude---is far larger than
the structural one.}
\label{fig:kevrekidis}
\end{figure}

\paragraph{Dense metrics: the realizable splitting alternative is not
competitive.} For the double pendulum no term-wise splitting with elementary
sub-flows exists in its chart
(Section~\ref{sec:nonsep}), and the splitting route requires the depth-one
commutator representation of~\cite{Kevrekidis2026}. We construct it explicitly:
the kinetic term is written as
$\{F_1,p_1^3/3\}+\{F_m,p_1^2p_2/2\}+\{F_4,p_2^3/3\}$ with closed-form
antiderivatives ($F_1'=a+b/2$, $F_m'=b$, $F_4'=-c$ for the coefficient
functions $a,b,c$ of the double-pendulum metric; the representation is
verified to reproduce $T$ to
machine precision), each bracket flow realized by symmetric group-commutator
gadgets with $\sqrt{h}$ sub-steps and the whole step palindromically
symmetrized. Every sub-step is an exact contactomorphism, and the composed step
indeed satisfies the exact conformal pullback to the finite-difference floor
($\lesssim10^{-9}$) for $\gamma\in\{0,0.1\}$. Its accuracy, however, is not
competitive: the realization converges at first order with measured error
$\approx3\times10^{2}\,\tau$ on the benchmark orbit ($1.9\times10^{-1}$ at
$\tau=6.25\times10^{-4}$, observed orders $0.91$--$1.01$), so matching the
projected method's endpoint error at that step would require a step several
orders of magnitude smaller. First order is the expected rate of this
realization, not an implementation artifact. The $\sqrt{h}$ sub-steps are the
mechanism: each gadget makes internal excursions of size $\mathcal{O}(\sqrt h)$
whose cancellation is exact for the contact form but only approximate for the
trajectory; in the Baker--Campbell--Hausdorff expansion of the palindromic
gadget the half-power excursion terms cancel under reversal of the $\sqrt h$
sub-steps (palindromy in the sub-step parameter), leaving an
$\mathcal{O}(h^{2})$ one-step trajectory defect that accumulates to first
order globally. Higher-order gadget realizations are asserted to exist
(Proposition~4.3 of~\cite{Kevrekidis2026}) but their composition coefficients
are not specified in the published text, and to our knowledge no multi-DOF
realization has appeared.

\paragraph{Dense metrics: the implicit exactly-contact alternative.}
The splitting route is not the only exactly-contact one. The lifted implicit
midpoint of Section~\ref{sec:nonsep}, Strang-composed with the exact
damping-potential flow, is second order and exactly conformal on any dense
metric, at the price of a Newton solve of the full $2n$-dimensional kinetic
map in every step. We implement it and run it on the damped double pendulum
under the protocol of this section (driver \texttt{run\_lifted\_midpoint.jl}).
It behaves as constructed: the measured $\eta$-pullback defect stays at the
finite-difference floor ($\lesssim2\times10^{-10}$ across
$\tau\in\{0.08,0.05,0.01\}$, with no $\tau^{3}$ scaling, for
$\gamma\in\{0,0.1\}$), the observed trajectory order is $2.000$, and the
contact decay law is reproduced ($3\times10^{-3}$ relative decay error at
$\tau=0.08$, $\gamma=0.1$, $T=6$). On the horizon of the panel the projected
method is more accurate at every step size (endpoint error $7.7\times10^{-5}$
against $1.5\times10^{-4}$ at $\tau=5\times10^{-3}$, a factor of about $1.9$
down the refinement ladder). The dense-metric comparison is therefore not
about existence: exactly-contact second-order alternatives can be built, and
the implicit one preserves $\eta$ itself where the projected step preserves
$\omega=d\eta$ exactly and carries an $\mathcal{O}(\tau^3)$ residual in
$\eta$. It is about cost and structure per step: a full-state implicit solve
with exact $\eta$ on one side; a projection-confined solve, inactive in the
fine-step regime, with exact $\omega$ rescaling and the smaller measured
trajectory error on the other.

\FloatBarrier

\section{Discussion and conclusions}
\label{sec:discussion}\label{sec:limitations}

The experiments support the central point of the paper: contact splitting is
effective for separable dissipative Hamiltonians, but it does not solve the
generic non-separable problem created by configuration-dependent kinetic energy.
The double pendulum, spherical pendulum, and torus particle all fall outside the
separable regime, yet their contact-Herglotz structure is handled uniformly by
the Pihajoki-contact method, with second-order accuracy throughout and a
projection that is inactive in the fine-step benchmarks but structurally relevant
in the coarse-step stiff regime of Section~\ref{subsec:exp-load}; the active-regime
probes of Section~\ref{subsec:exp-activation} are not evidence for a global
damped contact-preservation theorem. Relative to the contact splitting work of
Vijayan et al.~\cite{Vijayan2025}, whose setting is a separable or
Lie-structured problem in which exact sub-flows compose directly, the present
focus is complementary: here the kinetic flow is a nonlinear geodesic flow,
exact splitting is unavailable, and the projected extended-phase-space strategy
serves as the conservative backbone, extended with contact damping and a
Herglotz action update.

The sharper contrast is with Kevrekidis~\cite{Kevrekidis2026}. As noted in
Section~\ref{sec:introduction}, the class treated here lies exactly inside the
polynomial-in-$p$ Lie algebra of that work: the metric $g^{ij}(q)$ enters as a
coefficient function rather than as an obstruction, so no approximation step is
incurred, and its sub-steps are exact contactomorphisms. On our own target class
it therefore delivers exact contact preservation where we obtain an
$\mathcal{O}(\tau^3)$ one-step residual, with $C^r$ error control under the
hypothesis that the flow remains in a compact set. We do not claim a theoretical
advantage over it. The practical
question---which trade wins where---is no longer left open: the head-to-head of
Section~\ref{subsec:exp-kevrekidis} resolves it in both directions. Where the
metric has frozen-coordinate structure ($\mathbb{T}^2$, $S^2$), the exact
splitting in that generator class wins at matched cost, by more than an order of
magnitude in long-time drift, while preserving $\eta$ exactly; the projected
method's advantage there is confined to equal-step comparisons. Where the metric
is dense (the double pendulum), the depth-one realization we could assemble from
the published text is first order with a prohibitive constant, while the lifted
implicit midpoint of Section~\ref{sec:nonsep} is second order and exactly
conformal at the cost of a full-state Newton solve in every step; between the
two the projected method holds the semiexplicit middle ground, second order
with the smaller measured trajectory error and a solve confined to the
projection. The niche this paper occupies is that middle ground on dense
metrics, a cost-and-structure trade rather than a uniqueness claim. The most
informative experiment now open on the explicit side is a higher-order
multi-degree-of-freedom realization of the commutator construction, which
would sharpen the trade further.

The method has clear trade-offs, and its scope is deliberately limited, as set
out once in the Remark following Proposition~\ref{prop:contact-consistency}:
the analysis is local, one-step, and constant-friction, and does not establish
global contact preservation of the one-form $\eta$. The method is not fully
explicit: a symmetric projection solve remains part of the step, acting only on
the $2n$ projection variables. In the fine-step benchmarks the multiplier stays
zero and no correction is applied across the sampled steps, systems, and
friction values, but this inactive behavior is specific to the tested regime
and should not be assumed at coarser steps or in higher dimension: in
Section~\ref{subsec:exp-load} the projection activates at every step and
removing it produces secular energy drift. The double-pendulum work-precision
comparison therefore supports a cost advantage over implicit midpoint only in
the inactive-projection regime and for this finite-difference implementation;
against the Tao extended phase-space baseline with binding rotation, the
projected method is competitive in accuracy and also returns the duplicated
copies exactly to the physical diagonal at each step. The explicit duplicated
sub-flow also imposes a step-size stability threshold, measured in
Section~\ref{subsec:exp-stability} at
$\tau\approx0.2$--$0.4$ across the benchmark geometries, and all results are
reported at least a factor of two inside it. That threshold is a nonlinear
phenomenon: the linear model problem is unconditionally stable under exact
projection (Supplementary Material, \S\,S15), and none of our estimates
controls it.
Position-dependent
friction $\gamma(q)$ is implemented through an exactly integrable
variable-friction sub-flow and reproduces the path-dependent decay law
numerically, but its analytical one-step estimate remains open; nonlinear
dependence on $z$ lies outside the present construction. The benchmarks are
low-dimensional (up to three degrees of freedom, with a coupled-rotor robustness
check at $n=4,6$), so the finite-difference projection Jacobians should be
replaced by analytic tangent maps before substantially higher-dimensional tests.
Extending the contact-conformal estimate to $\gamma(q)$, analytic Jacobians,
high-dimensional benchmarks, adaptive time stepping, and comparison
against contact variational and Herglotz-based integrators~\cite{Maciel2023} are
left for future work. The low-dimensional evidence supports the method as a
practical projected route for non-separable contact Hamiltonian systems, not as a
high-dimensional scalability result or a universal replacement for implicit
methods.

\section*{CRediT authorship contribution statement}
\textbf{Lorena Loera-Galeana:} Conceptualization, Formal analysis,
Investigation, Writing -- original draft.
\textbf{Santiago Mej\'ia:} Formal analysis, Investigation, Software,
Validation, Visualization, Writing -- review \& editing.
\textbf{Espartaco Alvarado:} Formal analysis, Investigation, Software,
Validation, Visualization, Writing -- review \& editing.
\textbf{H\'ector Medel-Cobaxin:} Conceptualization, Formal analysis,
Investigation, Methodology, Supervision, Validation, Visualization,
Writing -- original draft.

\section*{Declaration of competing interest}
The authors declare that they have no known competing financial interests or
personal relationships that could have appeared to influence the work reported
in this paper.

\section*{Funding}
This research did not receive any specific grant from funding agencies in the
public, commercial, or not-for-profit sectors.

\section*{Data availability}
The code, benchmark parameters, scripts, metadata, and reproducibility files
supporting this study are available from the corresponding author upon
reasonable request.
The package contains source code, input parameters, plotting scripts,
comma-separated result files, per-run metadata, and the Julia project
environment used for the computations. A single documented driver,
\texttt{analysis/run\_all.jl}, regenerates every result file, generated
\LaTeX{} table body, and figure used in the main text and the Supplementary
Material; the one randomized experiment (the robustness ensemble) uses a fixed
seed (\texttt{20260618}) recorded in its run metadata. Results were produced
with Julia~1.12.6, which we recommend for reproducing them; Julia~1.12 or later
is required, since under earlier versions a closure-capture miscompilation can
cause the finite-difference projection Jacobian to vanish silently and yield
incorrect projections. The requirement is declared in the package
\texttt{Project.toml}. Upon journal acceptance an immutable snapshot of the
package will be archived on a DOI-minting repository and the DOI cited in the
published version. The full
derivations of the benchmark Lagrangians, Hamiltonians, equations of motion,
and dissipation laws are provided in the accompanying Supplementary Material.

\section*{Acknowledgements}
The authors have no acknowledgements to report.

\section*{Declaration of generative AI and AI-assisted technologies in
  the manuscript preparation process}
During the preparation of this work the authors used AI-assisted tools
(Claude and Consensus) to improve the readability and language of the
manuscript, and to assist with literature searches. The authors take
full responsibility for the content of the published article.

\end{document}